\newif\if@restonecol
\newcommand{\ba}{\begin{array}}
	\newcommand{\ea}{\end{array}}
\newcommand{\be}{\begin{equation}}
\newcommand{\ee}{\end{equation}}
\newcommand{\bea}{\begin{eqnarray}}
\newcommand{\eea}{\end{eqnarray}}
\newcommand{\bean}{\begin{eqnarray*}}
	\newcommand{\eean}{\end{eqnarray*}}
\newcommand{\bc}{\begin{center}}
	\newcommand{\ec}{\end{center}}
\title{A Uniform Framework for   Diagnosis of Discrete-Event Systems with Unreliable Sensors using Linear Temporal Logic}
\author{Weijie~Dong,~\IEEEmembership{Student Member,~IEEE,}
	Xiang~Yin,~\IEEEmembership{Member,~IEEE,}
	Shaoyuan Li,~\IEEEmembership{Senior Member,~IEEE}
	\thanks{This work was supported by    the National Natural Science Foundation of China (62061136004, 62173226, 61833012) and the National Key Research and Development Program of China (2018AAA0101700).}
	\thanks{W. Dong X. Yin and S. Li  are with Department of Automation and Key Laboratory of System Control and Information Processing,
		Shanghai Jiao Tong University, Shanghai 200240, China.
		E-mail: {\tt\small  \{wjd\_dollar,yinxiang,syli\}@sjtu.edu.cn}. (Corresponding Author: Xiang Yin) } 
}
\newtheorem{mydef}{Definition}
\newtheorem{mythm}{Theorem}
\newtheorem{myexm}{Example}
\newtheorem{remark}{Remark}
\begin{document}

	\maketitle \pagestyle{plain}
	
\begin{abstract}
In this paper, we investigate the diagnosability verification problem of partially-observed discrete-event systems (DES) subject to unreliable sensors. In this setting, upon the occurrence of each event, the sensor reading may be non-deterministic due to measurement noises or possible sensor failures. 
Existing works on this topic mainly consider specific types of unreliable sensors such as the cases of intermittent sensors failures, permanent sensor failures or their combinations. 
In this work, we propose a novel \emph{uniform framework} for diagnosability of DES subject to, not only sensor failures, but also a very general class of unreliable sensors. Our approach is to use linear temporal logic (LTL) with semantics on infinite traces to describe the possible behaviors of the sensors. A new notion of $\varphi$-diagnosability is proposed as the necessary and sufficient condition for the existence of a diagnoser when the behaviors of sensors satisfy the  LTL formula $\varphi$. Effective approach is provided to verify this notion.  We show that, our new notion  of $\varphi$-diagnosability subsumes all existing notions of robust diagnosability of DES subject to sensor failures. Furthermore, the proposed framework is  user-friendly and flexible since it supports an arbitrary user-defined  unreliable sensor type based on the specific scenario of the application. As examples, we provide two new notions of diagnosability, which  have never been investigated in the literature, using our uniform framework. 
\end{abstract}
	
\section{Introduction}	  
\subsection{Backgrounds and Motivations} 
Engineering cyber-physical systems (CPS), such as flexible manufacturing systems, intelligent transportation systems and power systems, are generally very complex due to their intricate operation logic and hybrid dynamics. For such large-scale  systems, \emph{failures} during their operations are very common as millions of their components or sub-systems are working in parallel under uncertain environments. Therefore, \emph{failure diagnosis and detection} are crucial but challenging tasks in order to monitor the operation conditions and to ensure safety for safety-critical CPSs. 
 
In this paper, we investigate the fault diagnosis problem in the framework of discrete-event systems (DES), which is a class of systems widely used in modeling the high-level logical behaviors of CPSs \cite{cassandras2008introduction}.  In the context of DES, the problem of fault diagnosis was initiated by \cite{lin1994diagnosability,sampath1995diagnosability}, where the notion of \emph{diagnosability} was proposed. It is assumed that the event set of the system is partitioned as observable events and unobservable events, and a system is said  to be diagnosable if the occurrence of fault event can always be detected within a finite delay based on the observation sequence. In the past years, fault diagnosis of DES remains a hot topic due to its importance; see, e.g., some recent works \cite{lin2017n,yin2017decidability,basile2017diagnosability,lefebvre2007diagnosis,takai2017generalized,chen2018revised,ran2018codiagnosability,ran2019enforcement,yin2019robust,viana2019codiagnosability,hu2020design,hu2021diagnosability,ma2021marking,cao2021weak,pencole2022diagnosability}. The reader is referred to the comprehensive survey paper and textbook \cite{zaytoon2013overview,lafortune2018history,basilio2021analysis} for more details on this topic.

In the modeling of DES, the occurrences of   events are essentially detected by the corresponding \emph{sensors}. 
In practice, however, due to measurement noises, sensor failures, communication losses or even malicious attacks, 
the sensor readings can be \emph{unreliable} or \emph{non-deterministic} so that the occurrences of  events, which can be detected perfectly in the ideal case, may not always be observed correctly.  For example, the sensor for an observable event may \emph{fail} in the sense that the occurrence of the underlying event cannot be detected. The failures of the sensors can be either intermittent or permanent depending on whether or not the failures can be recovered.  Also, in the networked setting, even when the sensors can always detect their underlying events, their readings still need to be transmitted from the sensors to the diagnoser via communication channels, in which packet losses are possible.  Therefore, the unreliability of sensors are  practical and non-negligible issues in the analysis of diagnosability.

\subsection{Literature Review on Diagnosis with Unreliable Sensors}
In the context of DES, robust fault diagnosis for systems subject to unreliable sensors (or subject to sensor failures) has been studied by many researchers. The reader is referred to  the very recent comprehensive survey papers \cite{boussif2021intermittent,carvalho2021comparative} for more details on this topic.  
For example, in \cite{carvalho2012robust}, the authors investigated the verification of diagnosability for systems subject to \emph{intermittent sensor failures}. In the setting, it is assumed for each observable event, its sensor may fail or recover freely at each time instant. The scenario of intermittent sensor failures can be considered as an instant of \emph{non-deterministic observations} \cite{thorsley2008diagnosability,athanasopoulou2010maximum,takai2012verification}. For example, in \cite{takai2012verification}, the authors proposed to use Mealy automata with non-deterministic output function to capture the scenario where multiple observations are possible for the same transition.

Diagnosability analysis for systems subject to \emph{permanent sensor failures} were studied in \cite{carvalho2013robust,kanagawa2015diagnosability,tomola2017robust,wada2021decentralized}.  In this context, once a sensor fails, it will never recover, i.e., the underlying event becomes unobservable thereafter.  
In \cite{carvalho2013robust}, the authors investigated the robust diagnosability verification problem by assuming that  permanent sensor failures  happen only before the first occurrence of underlying events. This assumption was then relaxed by  \cite{kanagawa2015diagnosability}, which allows that permanent sensor failures can happen at any instant. The results in \cite{carvalho2013robust} and \cite{kanagawa2015diagnosability} have also been extended to the decentralized setting by \cite{tomola2017robust} and \cite{wada2021decentralized}, respectively.

More recently, attempts have been made to unify the notions of diagnosability for systems subject to intermittent sensor failures and permanent sensor failures. For example, it was showed in \cite{carvalho2021comparative} that diagnosability for systems subject to both intermittent and permanent sensor failures can be transferred to the notion of general robust diagnosability under model uncertainty \cite{carvalho2011generalized}.  
In practice, different sensors in the system may belong to different failure types: some may recover after the failure but some may not. 
To address this issue, in \cite{takai2021general}, the author proposed a general framework for robust diagnosability analysis which supports both the case of intermittent  sensor failures and the case of    permanent sensor failures uniformly. 

The effects of unreliable sensors are also closely related to the topic of \emph{networked DES}; see, e.g., \cite{lin2014control,shu2015supervisor,nunes2018codiagnosability,sasi2018detectability,tai2022new}. In this setting,   sensors need to send their readings to the user,  which can be a diagnoser or a controller depend on the context, via communication channels.  Packet delays or losses in the channels then also result in non-deterministic observations, which is very similar to the effect of unreliable sensors. For example, in \cite{oliveira2022k}, the authors studied the verification of $K$-loss diagnosability by assuming that the communication channel can only have a bounded number of consecutive   observation losses. Finally, besides the analysis of diagnosability, the effects of unreliable sensors have also been studied for other  state estimation problems \cite{yin2017initial,tong2021state,zhou2021detectability} as well as  the supervisory control problem \cite{rohloff2005sensor,ushio2016nonblocking,yin2017supervisor,alves2021robust}.   

\subsection{Our Approach and Contributions} 
In the aforementioned existing works on diagnosability analysis for DES subject to unreliable sensors, one  needs to develop customized techniques for different types of sensor failures or sensor modes. For example, in \cite{takai2021general}, where intermittent and permanent sensor failures are unified, one needs to build the sensor automaton that captures all possible switching between the normal mode and the failure mode.  Here, we observe that, each type of sensor failures is actually a \emph{constraint} on how the behaviors of the sensors can change. Such a sensor constraint is essentially a \emph{linear-time property} on the sequence of the sensor modes. For example, the case of permanent sensor failures is actually a safety property requiring that once the sensor goes to a failure mode, it should not recover anymore. 

Motivated by the above observation, in this paper, we propose a  new \emph{uniform framework} for fault diagnosis of DES subject to unreliable sensors. In contrast to existing works, our approach does not rely on any pre-specified type of sensors. Instead, we propose to use Linear Temporal Logic (LTL) formulae as a general tool to describe an arbitrary type of unreliable sensors.  
To this end,  we first adopt the model of  Mealy automata  with non-deterministic output functions proposed in \cite{takai2012verification,ushio2016nonblocking} as the \emph{unconstrained  observation space} of the sensors.  Then we further use an LTL formula $\varphi$, which is referred to as the \emph{sensor constraint} and its specific form is scenario dependent, to describe how sensors can behave within the uncontrained space.   
We propose a new diagnosability condition, called $\varphi$-diagnosability,  following an \emph{assumption based} type of reasoning. 
That is,  whenever the sensors behave following the assumption $\varphi$, the diagnoser should be able to detect the fault within a finite number of steps. We provide an effective procedure for verifying this new condition. In particular, the complexity of the verification procedure is exponential in the number of operators in LTL formula $\varphi$ but is only polynomial in the size of the plant model.  

Our uniform framework is developed based on an arbitrary LTL formula $\varphi$ without its specific meaning. To show the generality of our framework, we further show how the proposed notion of $\varphi$-diagnosability subsumes existing notions of robust diagnosability by explicitly writing down $\varphi$ for different scenarios  of unreliable  sensors. In particular, we show that existing notions of  diagnosability for DES subject to 
(i) intermittent sensor failures  \cite{thorsley2008diagnosability,carvalho2012robust,takai2012verification}; 
(ii) permanent sensor failures  \cite{carvalho2013robust,kanagawa2015diagnosability}; and 
(iii) bounded sensor losses \cite{oliveira2022k}, 
can all be  captured within  our framework in a uniform manner. 

Furthermore, since our notion of  $\varphi$-diagnosability is very generic and LTL is a very expressive and human-language-like approach for presenting linear-time properties, our framework, in fact,  provides a very general and user-friendly way to study diagnosability under arbitrary user-defined or scenario-specified unreliable sensors. As examples, we further introduce two practical scenarios of unreliable sensors, which have never been studied in the literature. 
One is the scenario of \emph{minimum dwell-time} of sensor modes, i.e, whenever a sensor fails, it cannot recover immediately until some pre-specific time period.  
Another is the scenario of \emph{output fairness} in the setting of non-deterministic observations, where each fair observation symbol will eventually be observed if it has infinite chances to be observed.

We note that using linear temporal logic in diagnosability analysis of DES has been investigated in the literature. For example, in \cite{jiang2004failure,chen2015fault}, the authors investigated how to detect faults described by the violations of   LTL formulae. In \cite{cassez2012complexity,bittner2022diagnosability,tuxi2022diagnosability,pencole2022diagnosability}, the verification of standard diagnosability (with reliable sensors) was solved using LTL model checking techniques. However, here, we \emph{do not} aim to detect fault described by LTL or to solve diagnosability verification using LTL model checking.  The role  of LTL in our framework is very different from existing purposes. We use LTL formulae as a general and systematic way to describe different types of unreliable sensors, which, to the best of our knowledge, has never been used in the literature. 

\subsection{Organizations} 
The remaining part of the paper is organized as follows.
Section~\ref{sec:Preliminaries} presents the basic  observation model and the standard notion of diagnosability.  
In Section~\ref{sec:Describing Unreliable Sensors using LTL Formulae}, 
we describe how to use LTL formulae to model unreliable sensors.   
In Section~\ref{sec:Diagnosability under  Sensor Constraints}, we introduce the notion of $\varphi$-diagnosability as the necessary and sufficient condition for the existence of a diagnoser that works correctly under sensor  constraint  $\varphi$.  In Section~\ref{sec:Verification of varphi-diagnosability}, an effective approach is developed for the  verification of $\varphi$-diagnosability. 
In Section~\ref{sec:Application}, we show explicitly how our uniform framework 
subsumes many existing notions as well as supports new types of unreliable sensors that have not been considered.  Finally, we conclude the paper in Section~\ref{sec:Conclusion}. 
Our preliminary idea of using LTL formulae to described sensor behaviors was initially presented in \cite{dong2022fault}. However, \cite{dong2022fault} only focuses on a very specific type of non-deterministic sensors, i.e., output fairness. The framework in the present paper is much more general, where the result in \cite{dong2022fault} is a now very special instance as described in Section~\ref{subsec:fair}.

\section{Preliminaries}\label{sec:Preliminaries}

\subsection{Partially-Observed Discrete Event Systems}
We assume basic knoweledge in formal languages and discrete-event systems  \cite{cassandras2008introduction}. 
Let $\Sigma$ be a set of events or alphabets.  
A string $s=\sigma_1\sigma_2\cdots \sigma_n (\cdots)$ is a finite (or infinite) sequence of events. 
For a finite string $s$, we denote by $|s|$ the length of $s$, which is the number of events in it.  
We denote by $\Sigma^*$ and $\Sigma^{\omega}$ the set of all finite and infinite strings over $\Sigma$, respectively.  
We write $\Sigma^+=\Sigma^*\cup \Sigma^\omega$ as the set of finite or infinite strings. 
Note that $\Sigma^*$ includes the empty string $\varepsilon$. 
A $*$-language $L\subseteq \Sigma^*$ is a set of finite strings and an $\omega$-language $L\subseteq \Sigma^\omega$ is a set of infinite strings. 
For any $*$-language $L\subseteq \Sigma^*$, its \emph{prefix-closure} is a  $*$-language 
$\overline{L}=\{s\in \Sigma^*: \exists t \in \Sigma^* \text{ s.t. }  s t\in L\}$. 
Analogously, for any $\omega$-language $L \subseteq \Sigma^\omega$,  its prefix-closure is defined as the set of all its finite prefixes, which is $*$-language $\overline{L}=\{s \in \Sigma^*:\exists t \in \Sigma^\omega \text{ s.t. } s t \in L\}$. 
%
%
Given an infinite string $s \in \Sigma^\omega$,  $\textsf{Inf}(s)$ denotes the set of events that appear infinite number of times in $s$.

In this work, we consider a DES modeled by a deterministic finite-state automaton 
\[
G=(Q,\Sigma,\delta,q_0),
\]
where  
$Q$ is the finite set of states, 
$\Sigma$ is the finite set of events,
$\delta:Q \times \Sigma \rightarrow Q$ is the partial deterministic transition function such that: 
for any $q,q'\in Q,\sigma\in \Sigma$, $\delta(q,\sigma)=q'$ means that there exists a transition from state $q$ to state $q'$ labeled by event $\sigma$, and
$q_0 \in Q$ is the initial state. 
Note that, we do not consider marked states in the system model since it is irrelevant to our purpose of diagnosability verification. 
The transition function $\delta$ is also extended to  
$\delta: Q\times \Sigma^*\to Q$ recursively by: 
for any $q\in Q,s\in \Sigma^*$ and $\sigma\in \Sigma$, we have
$\delta(q,\varepsilon)=q$  and  $\delta(q,s\sigma)=\delta(\delta(q,s),\sigma)$. 
Then the $*$-language generated by $G$ is defined by 
$\mathcal{L}(G)=\{s\in \Sigma^*:\delta(q_0,s)!\}$, where ``!" means ``is defined", 
and the $\omega$-language generated by $G$ is defined by 
$\mathcal{L}^\omega(G)=\{s\in \Sigma^\omega: \forall s'\in \overline{\{s\}}, \delta(q_0,s')!\}$. 
We assume that system $G$ is live, i.e., for any state $q\in Q$, there exists an event $\sigma\in \Sigma$ such that 
$\delta(q,\sigma)!$.  

In the partial observation setting, the occurrence of each event may not be perfectly observed. 
The limited sensor capability is usually modeled by an observation  mask
\begin{equation}\label{eq:mask}
	O: \Sigma\to \Delta \dot{\cup} \{\varepsilon\}, 
\end{equation} 
where $\Delta$ is a new set of observation symbols.  
That is, we observe $O(\sigma)$ upon the occurrence of event $\sigma\in \Sigma$.
We say event $\sigma\in \Sigma$ is observable if $O(\sigma)\in \Delta$ and unobservable if $O(\sigma)=\varepsilon$.
The observation mask is also extended to $O:\Sigma^+\to \Delta^+$ by: 
for any $s\in \Sigma^+$, $O(s)$ is obtained by replacing each event $\sigma$ in string $s$ as $O(\sigma)$. 

\subsection{Observations under Unreliable Sensors}
The observation mask as defined in Equation~\eqref{eq:mask} corresponds to the case of \emph{reliable sensors}. That is, if event $\sigma\in \Sigma$ is observable, then the sensor reading will always be $O(\sigma)$ whenever it occurs. 
In many real-world scenarios, however, sensors may be \emph{unreliable} due to multiple reasons. 
First, due to possible  \emph{sensor failures}, an observable event may become unobservable either permanently or intermittently.  
Second, due to measurement noises, the sensor reading may be \emph{non-deterministic} in the sense that we may observe different symbols for difference occurrences of the same event.  
Furthermore, the sensor reading may be \emph{state-dependent} in the sense that the same event may have different readings at different states.  

To capture the  observations under unreliable sensors, in  \cite{takai2012verification,ushio2016nonblocking}, the authors proposed to use the following state-dependent (or transition-based) non-deterministic observation mapping 
\begin{equation} \label{eq:non-deterministic observation mapping}
	\mathcal{O}: Q\times \Sigma \to 2^{\Delta\cup\{\varepsilon\}}.
\end{equation}
Specifically,  $\Delta$ is the set of all possible observation symbols.  
For any state $q\in Q$ and event $\sigma\in \Sigma$, 
$\mathcal{O}(q,\sigma)$ denotes \emph{the set of all possible observations} if event $\sigma$ occurs at state $q$, 
i.e., the sensor reading may be any symbol $o\in \Delta$ or $\varepsilon$ in $\mathcal{O}(q,\sigma)$  non-deterministically. 

The observation mapping $\mathcal{O}$ can also be extended to 
$\mathcal{O}: Q \times \Sigma^+ \to 2^{\Delta^+}$ by: 
for any state $q \in Q$ and string $s=\sigma_1\sigma_2\cdots \in \Sigma^+$, 
we have $o_1 o_2 \cdots\in \mathcal{O}(q,s)$ if 
\begin{equation}
	\forall i=1,2,\dots: o_i\in \mathcal{O}( f(q,\sigma_1\dots \sigma_{i-1}),\sigma_i ),
\end{equation} where $\sigma_{0}:=\varepsilon$.
Since the observations are non-deterministic, in general,  $\mathcal{O}(q,s) \subseteq \Delta^+$ is not a singleton and each sequence in $\mathcal{O}(q,s)$ is called an  \emph{observation realization} of string $s$. 
For the sake of brevity, we use $\mathcal{O}(s)$ to denote $\mathcal{O}(q,s)$  if the state $q$ is the initial state, i.e., $q=q_0$.

To incorporate the actual observation into the internal execution of the system,  it is convenient to define 
\[
\Sigma_e=Q\times \Sigma\times (\Delta \cup \{\varepsilon\})
\]
as the set of \emph{extended events}. 
Then, an extended string is a finite or infinite sequence of extended events.
We say an extended string
$s=(q_0,\sigma_0,o_0)(q_1,\sigma_1,o_1)\dots  \in \Sigma_e^+$ is generated by $G$ if 
for any $i=1,2,\dots$, we have $f(q_i,\sigma_i)=q_{i+1}$ and $o_i\in \mathcal{O}(q_i,\sigma_i)$. 
We denote by $\mathcal{L}_e(G)$ and $\mathcal{L}_e^\omega(G)$ the set of all finite and infinite extended strings generated by $G$, respectively.  
Essentially, an extended string contains three parts of information: 
(i)  the state sequence visited, 
(ii)  the event sequence generated, and 
(iii) the output sequence observed. Then  for any extended string
$s=(q_0,\sigma_0,o_0) (q_1,\sigma_1,o_1)\cdots \in \Sigma_e^+$, 
we define 
$\Theta_{Q}(s)=q_0 q_1\cdots\in Q^+$, 
$\Theta_{\Sigma}(s)=\sigma_0\sigma_1 \cdots\in \Sigma^+$. and $\Theta_{\Delta}(s)=o_0 o_1\cdots\in (\Delta\cup \{\varepsilon\})^+$ 
as its corresponding state sequence, (internal) event string and output string, respectively.
Clearly, for any  
$s \in \Sigma_e^+$, we have 
$\Theta_{\Delta}(s) \in \mathcal{O}(\Theta_{\Sigma}(s))$
because $\Theta_{\Delta}(s)$ is a specific observation realization of internal string $\Theta_{\Sigma}(s)$.  

\begin{myexm} \label{exam:system example}\upshape
	Let us consider system $G_1$ shown in Figure~\ref{fig:System G_1}, where  $\Sigma=\{a,b,c,f,u\}$ and $\Delta=\Sigma$. 
	The output function $\mathcal{O}: Q\times \Sigma \!\to\! 2^{\Delta\cup\{\varepsilon\}}$ is specified by the label of each transition, 
	where the LHS of   ``$\backslash $" denotes the internal event 
	and  the RHS of   ``$\backslash $" denotes the set of all possible output symbols. 
	For example, $b \backslash  \{b,\varepsilon\}$ from state $3$ to state $4$ means that 
	at state $3$, upon the occurrence of event $b$, the system moves to state $4$, i.e., $\delta(3,b)=4$,  and we may observe either $b$ or nothing, i.e.,  $\mathcal{O}(3,b)= \{b,\varepsilon\}$. 
	Then for finite string  $f a b \in \mathcal{L}(G_1)$, 
	the set of all possible observation realizations is $\mathcal{O}(f a b)=\{a b,a\}$. 
	These two different observations lead to the following two extended strings   $(1, f,\varepsilon)(2,a,a)(3,b,b)\in \mathcal{L}_e(G)$ and  $(1,f ,\varepsilon)(2,a,a)(3,b,\varepsilon)\in \mathcal{L}_e(G)$, respectively.
\end{myexm}
\begin{figure} 
	\centering
	\begin{tikzpicture}[->,>={Latex}, thick, initial text={}, node distance=1.25cm, initial where=above, thick, base node/.style={circle, draw, minimum size=5mm, font=\small}]		
		\node[initial, state, base node, ] (1) at (1.5,7) {$1$}; 
		\node[state, base node, ](2) at (0.5,6) {$2$};
		\node[state, base node, ](3) [below of=2] {$3$};
		\node[state, base node, ](4) [below of=3] {$4$};
		\node[state, base node, ](5) at (2.5,6) {$5$};
		\node[state, base node, ](6) [below of=5] {$6$};
		\node[state, base node, ](7) [below of=6] {$7$};
		\node[state, base node, ](8) [below of=7] {$8$};
		
		\path[]
		(1) edge node [left, xshift=0.1cm, yshift=0.15cm] {\fontsize{8}{1} $f \backslash\! \{\varepsilon\} $} (2)
		(1) edge node [right, xshift=-0.2cm, yshift=0.15cm] {\fontsize{8}{1} $u \backslash\! \{\varepsilon\}$} (5)
		(2) edge node [right, xshift=-0.1cm] {\fontsize{8}{1} $a\! \backslash\! \{a\} $} (3)
		(3) edge node [midway,right, xshift=-0.1cm] {\fontsize{8}{1} $b\! \backslash\! \{b,\varepsilon\} $} (4);
		
		\draw (4.west) -- ($(4.west)+(-0.75,0)$)
		($(4.west)+(-0.75,0)$) -- node [right, xshift=-0.15cm] {\fontsize{8}{1} $c\! \backslash\! \{c\} $} 
		($(2.west)+(-0.75,0)$)
		-- (2.west);
		
		\path[]
		(5) edge node [midway,right, xshift=-0.1cm] {\fontsize{8}{1} $a  \backslash\! \{a\} $} (6)
		(6) edge node [midway,right, xshift=-0.1cm] {\fontsize{8}{1} $b  \backslash\! \{b,\varepsilon\}$} (7)
		(7) edge node [right, xshift=-0.1cm ] {\fontsize{8}{1} $ c  \backslash\! \{c\} $} (8);
		
		\draw
		(8.west) -- ($(8.west)+(-0.55,0)$)
		($(8.west)+(-0.55,0)$) -- node [midway,right, xshift=-0.15cm] {\fontsize{8}{1} $a\! \backslash\! \{a\} $} ($(7.west)+(-0.55,0)$)
		-- (7.west);
		
	\end{tikzpicture}
		\caption{\label{fig:System G_1}System $G_1$.}
	\end{figure}
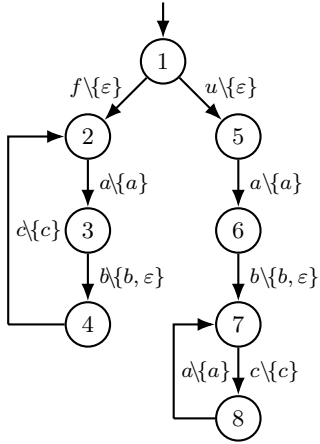
	
	\begin{remark} \label{rem:capture intermittent}\upshape
		The non-deterministic observation mapping $Q\times \Sigma \to 2^{\Delta\cup\{\varepsilon\}}$ is quite general and subsumes many observation models in the literature. 
		For example, for the standard natural projection, we assume that $\Sigma_o\subseteq \Sigma$ is the set of observable events and let $\Delta=\Sigma_o$.  
		Then one can define $\mathcal{O}$ by: 
		$\mathcal{O}(q,\sigma)=\{\sigma\}$ for all $\sigma\in \Sigma_{o}$
		and 
		$\mathcal{O}(q,\sigma)=\{\varepsilon\}$ for all $\sigma\notin \Sigma_{o}$. 
		Also, it captures the so-called \emph{intermittent loss of observations} \cite{carvalho2012robust,lin2014control}.  In this setting,  the event set is usually partitioned as 	$\Sigma=\Sigma_{r} \dot{\cup} \Sigma_{ur} \dot{\cup} \Sigma_{uo}$, 
		where $\Sigma_{r}$ is the set of reliable events whose occurrences can  always be observed directly, $\Sigma_{ur}$ is the set of unreliable events whose occurrences may be observed but can also be lost and  $\Sigma_{uo}$ is the set of unobservable events whose occurrences can  never be observed. 
		This setting can be captured by considering  $\mathcal{O}$ such that  $\Delta=\Sigma_r\cup \Sigma_{ur}$ and  for any $q \in Q$ and  $\sigma \in \Sigma$, we have
		\begin{align}\label{eq:intermittent observation}
			\mathcal{O}(q,\sigma)\!=\!
			\left\{\! 
			\begin{array}{l l}
				\{\sigma\} &\text{if } \sigma \!\in\! \Sigma_r \\
				\{\sigma,\varepsilon\} &\text{if } \sigma \!\in\! \Sigma_{ur} \\ 
				\{\varepsilon\} &\text{if } \sigma \!\in\! \Sigma_{uo} 
			\end{array} 
			\right..
		\end{align}
		Note that, for the   general case  we consider here, the output symbols $\Delta$ can be different from the original event set $\Sigma$. 
	\end{remark}

\subsection{Fault Diagnosis}
In the context of fault diagnosis of DES, it is assumed that the system is subject to faults, which are modeled by a set of fault events $\Sigma_F \subset \Sigma$. 
For the sake of simplicity, we do not distinguish among different fault types in this work. 
We say a string $s\in \Sigma^+$ is faulty if it contains a fault event in $\Sigma_F$ and we write $\Sigma_F\in s$ with a slight abuse of notation; otherwise, we call string $s$ a  normal string. 
We denote by  $\mathcal{L}_F(G)$ and $\mathcal{L}_F^\omega(G)$ as the sets of all finite and infinite faulty strings generated by $G$, respectively. 
Similarly, we define $\Sigma_{e,F}= Q\times \Sigma_F\times (\Delta\cup \{\varepsilon\}) \subset \Sigma_e$ as the set of extended fault events 
and we denote by    $\mathcal{L}_{e,F}(G)$ and $\mathcal{L}_{e,F}^\omega(G)$ as the sets of all finite and infinite extended faulty strings generated by $G$, respectively.
Finally, we define $\Psi_e(G)$ as the set of all finite extended faulty strings in which extended fault events occur \emph{for the first time}, i.e., 
\begin{align}
	\Psi_e(G)=\{s \in \mathcal{L}_{e,F}(G): \forall t \in \overline{\{s\}} \setminus \{s\},  \Sigma_{e,F} \notin  t \}. \nonumber
\end{align} 

To capture whether or not the occurrences of fault events can be always detected within a finite number of steps, the notion of diagnosability (under non-deterministic observations) has been proposed in the literature \cite{takai2012verification}. 

\begin{mydef}[Diangnosability] \label{def:classical diag}\upshape
	System $G$ is said to be \emph{diagnosable} w.r.t.\    mapping $\mathcal{O}$ and fault events $\Sigma_F$ if 
	\begin{equation} \label{eq:classical diagnosis}
		(\forall s \!\in\! \Psi_e(G) ) (\exists n\!\in\! \mathbb{N}) (\forall st  \!\in\! \mathcal{L}_e(G)) 
		[ |t| \geq n \Rightarrow \textsf{diag} ], 
	\end{equation}
	where the diagnostic condition $\textsf{diag}$ is 
	\[
	(\forall \omega \in \mathcal{L}_e(G))  
	[\Theta_{\Delta}(\omega)=\Theta_{\Delta}(st) \Rightarrow    \Sigma_{e,F}\in \omega]. \vspace{8pt}
	\] 
\end{mydef}

Intuitively, the above definition says that,  
for any faulty extended string in which fault events appear for the first time, there exists a finite detection bound such that,   
for any of its continuation longer than the detection bound, 
any other extended strings having the same observation must also contain fault events, which means we can claim for sure that fault events have occurred.  Note that we consider extended strings $\mathcal{L}_e(G)$ rather than the internal strings $\mathcal{L}(G)$ in order to capture the issue of non-deterministic observations. 

\begin{myexm} \label{exam:classical not diagnosable}\upshape
	Again, we consider system $G_1$ depicted in Figure~\ref{fig:System G_1} and we assume $\Sigma_F=\{f\}$.  
	Let us consider faulty extended string  $(1,f,\varepsilon) \in \Psi_e(G)$, which can be extended  arbitrarily long as
	\[
	s_F=(1,f, \varepsilon)[(2,a,a)(3,b,\varepsilon)(4,c,c)]^n.
	\]
	However, for any $n$, we can find a normal extended string 
	\[
	s_N=(1,u, \varepsilon)(5,a,a)(6,b,\varepsilon)[(7,c,c)(8,a,a)]^{n-1}(7,c,c)
	\]
	such that $\Theta_{\Delta}(s_F)=\Theta_{\Delta}(s_N)=(a c)^n$. 
	Therefore, we know that system $G_1$ is not diagnosable with respect to $\mathcal{O}$ and $\Sigma_F$.
\end{myexm}

\section{LTL-Based Description of Unreliable Sensors}\label{sec:Describing Unreliable Sensors using LTL Formulae}
%

\subsection{Motivation and Main Idea}
In the previous section, we have introduced non-deterministic observation mapping $\mathcal{O}: Q\times \Sigma \to 2^{\Delta\cup\{\varepsilon\}}$ to capture all possible observations. However, this model essentially provides the \emph{possible observation space} for the purpose of worst-case analysis. When the observation non-determinism has a concrete physical meaning, the extended language $\mathcal{L}_e(G)$ may contain  observation realizations that are not feasible in practice.  

To motivate our developments, we take the scenario of \emph{permanent sensor failures} as an example. As we have discussed in Remark~\ref{rem:capture intermittent}, the non-deterministic observation mapping specified in Equation~\eqref{eq:intermittent observation} can capture the scenario of intermittent sensor failures, i.e., for each event $\sigma\in \Sigma_{ur}$, we may observe $\sigma$ when the sensor is normal or $\varepsilon$ when the sensor fails.   
However, in the context of permanent sensor failures, once an unreliable sensor fails, it will not be normal forever.  
Then for system $G_1$  in Figure~\ref{fig:System G_1}, in the context of permanent sensor failures, the following extended string in $\mathcal{L}_e(G)$ will no longer be feasible,  
\[
s_F=(1,f,\varepsilon)(2,a,a)(3,b,\varepsilon)(4,c,c)(2,a,a)(3,b,b) 
\]
because the occurrence of extended event $(3,b,\varepsilon)$ means that the sensor corresponding to event $b$ has already failed and it is not possible to have extended event $(3,b,b)$ thereafter. 

Such a simple scenario of permanent sensor failures cannot be modeled by non-deterministic mapping in the form of Equation~\eqref{eq:non-deterministic observation mapping} directly.  To describe this scenario, different approaches have been developed in the literature \cite{carvalho2013robust,kanagawa2015diagnosability,wada2021decentralized} and the basic idea is to use additional states to capture the failure/normal status of  unreliable sensors.  

One of the motivations of our work is to provide a general framework that unifies the scenarios of intermittent and permanent sensors failures. However, our approach goes much beyond this basic objective and supports arbitrary \emph{user-specified} sensor behaviors. 
Specifically, the basic idea of our approach is as follows:
\begin{itemize}
	\item 
	First, we use non-deterministic observation mapping $\mathcal{O}: Q\times \Sigma \to 2^{\Delta\cup\{\varepsilon\}}$ to generate  the \emph{unconstrained} observation space $\mathcal{L}_e(G)$ without considering the specific setting of each sensor; 
	\item 
	Then, by incorporating how each sensor should behave, which is referred to as the \emph{sensor  constraint} throughout the paper, we further restrict the unconstrained observation space by eliminating those observations that are not feasible in the concrete setting.    
\end{itemize}
To realize the above idea, the key question is how to describe the behaviors of unreliable sensors for  different contexts. Our novelty here is that the user does not need to hand-code the feasible behaviors of the sensors. Instead, we provide a very generic and user-friendly way to describe the sensor  constraints using Linear Temporal Logic (LTL) formulae, which can further be proceeded algorithmically.

\subsection{Linear Temporal Logic}
Let $\mathcal{AP}$ be a finite set of atomic propositions. An LTL formula $\varphi$ is constructed based on a set of atomic propositions $\mathcal{AP}$, Boolean operators and temporal operators as follows:
\[
\varphi::=\texttt{true} \mid  p \mid   \neg \varphi \mid \varphi_1 \wedge \varphi_2 \mid  \bigcirc \varphi \mid   \varphi_1  U \varphi_2 ,
\]
where $p\in \mathcal{AP}$ is an atomic proposition, 
$\neg$ and $\wedge$ stand  for  logical  negation  and  disjunction, respectively, 
while $\bigcirc$ and $U$ denote ``next" and ``until", respectively. 
Note that other Boolean operators can be induced by $\wedge$ and $\neg$, e.g., 
$\varphi_1 \vee \varphi_2 = \neg (\neg \varphi_1 \wedge \neg \varphi_2)$ and $\varphi_1 \rightarrow \varphi_2 = \neg \varphi_1 \vee \varphi_2$. 
Furthermore, it is convenient to define temporal operators 
$\lozenge$  ``eventually" by $\lozenge \varphi = \texttt{true} U \varphi$ and  
$\Box$  ``always"  by $ \Box \varphi = \neg \lozenge \neg \varphi$. 

LTL formulae are evaluated over  infinite  strings of atomic proposition sets, which are also referred to as infinite \emph{words}. 
For any infinite word $s \in (2^\mathcal{AP})^\omega$, we denote by $s \models \varphi$ if it satisfies LTL formula $\varphi$ and use $\textsf{word}(\varphi)$ to denote all strings satisfying $\varphi$, i.e., $\textsf{word}(\varphi)=\{s\in (2^\mathcal{AP})^\omega: s \models \varphi\}$. 
The reader is referred to \cite{baier2008principles}  for more details on the semantics of LTL. 
For example, 
let $s=A_0A_1\dots \in (2^\mathcal{AP})^\omega$ be an infinite word and  $\phi$ be a Boolean formula without temporal operators.  
Then 
$s \models \Box \lozenge \phi$ means that  proposition $\phi$ holds \emph{infinitely often} in $s$,  i.e., $\forall i\geq 0,\exists j>i: \phi(A_j)=\texttt{true}$, while 
$s \models  \lozenge \Box \phi$ means that proposition $\phi$ holds \emph{forever after some finite delays} in $s$, i.e., $\exists i\geq 0,\forall j>i: \phi(A_j)=\texttt{true}$.

Given an LTL formula $\varphi$, to capture all infinite words satisfying $\varphi$, we introduce the notion of Non-deterministic B\"{u}chi Automaton (NBA) as follows.
\begin{mydef}[Non-deterministic B\"{u}chi Automaton]\upshape
	An NBA is a 5-tuple $\mathcal{B}=(X, X_0,\Sigma_B,\xi, X_m)$, where 
	$X$ is a finite set of states, 
	$X_0 \subseteq X$ is the set of initial states, 
	$\Sigma_B$ is an alphabet, 
	$\xi: X \times \Sigma_B \rightarrow 2^{X}$ is a non-deterministic transition function and $X_m \subseteq X$ is the set of accepting states. 
\end{mydef}

Given an infinite word $s= \sigma_0 \sigma_1 \cdots \in \Sigma_B^\omega$, an infinite path of $\mathcal{B}$ induced by $s$ is an infinite state sequence $\rho=x_0 x_1 \cdots \in X^\omega$  such that $x_0\in X_{0}$ and $x_{i+1} \in \xi(x_i,\sigma_i)$ for any $i=0,1, \cdots$.  
An infinite path $\rho$ is said to be \emph{accepted} by NBA $\mathcal{B}$ if it visits accepting states $X_m$ infinitely often, i.e., $\textsf{inf}(\rho) \cap X_m \neq \emptyset$. 
We say  an infinite word $s$ is accepted by $\mathcal{B}$ if it induces an infinite path accepted by $\mathcal{B}$. We denote by $\mathcal{L}^\omega_m (\mathcal{B})$ the set of strings accepted by NBA $\mathcal{B}$. 
The non-deterministic transition function $\xi$ can also be extended to $\xi: X \times \Sigma_B^* \rightarrow 2^{X}$ recursively in the usual manner. 
Also, we use notation $\mathcal{L}(\mathcal{B}) \subseteq \Sigma_B^*$ to denote the set of all finite strings generated by $\mathcal{B}$, i.e., 
$\mathcal{L}(\mathcal{B})=\{  s\in \Sigma_B^*: \exists x_0\in X_0\text{ s.t. } \xi(x_0,s)\neq \emptyset \}$. 

It is well-known that \cite{baier2008principles}, for any LTL formula $\varphi$, we can translate $\varphi$ to an NBA $\mathcal{B}$ over event set $\Sigma_B=2^\mathcal{AP}$ such that $\mathcal{L}^\omega_m(\mathcal{B})=\textsf{word}(\varphi)$ and we say such an  NBA $\mathcal{B}$ is associated with $\varphi$. There are efficient tools in the literature to translate an LTL formula to an NBA such as  \texttt{LTL2BA}  \cite{giannakopoulou2002states}. 

\subsection{Sensor Constraints as LTL Formulae}
To capture the sensor  constraints using LTL formulae, we define a labeling function 
\begin{equation}
	\textsf{label}: \Sigma_e \to 2^\mathcal{AP}
\end{equation} 
that assigns each extended event a set of atomic propositions. 
For any  infinite extended string $s=\sigma_1 \sigma_2\cdots   \in \Sigma_e^\omega$, its \emph{trace} is defined by 
$\textsf{trace}(s)=\textsf{label}(\sigma_1) \textsf{label}(\sigma_2) \cdots  \in (2^\mathcal{AP})^\omega$.  
Let $\varphi$ be an LTL formula specifying the sensor constraint. 
We say an infinite extended string $s\in \Sigma_e^\omega$ is $\varphi$-\emph{compatible}, if $\textsf{trace}(s) \models \varphi$.  
For the sake of simplicity, with   a slight abuse of notation, we also write $s \models \varphi$ whenever $\textsf{trace}(s) \models \varphi$. 
We define  
\[
\mathcal{L}_{e}^{\varphi}(G)=\{s \in \mathcal{L}^{\omega}_{e}(G): s \models  \varphi \} 
\]
as the set of  all $\varphi$-compatible infinite extended strings  generated by $G$. 

Intuitively, the above definition says that if we assume that the behaviors of the sensors satisfy   LTL formula $\varphi$, then only infinite observations in $\mathcal{L}_{e}^{\varphi}(G)$ are feasible in practice.  
Note that, we use infinite observation in order to match the semantic of LTL. 
For the purpose of online diagnosis, we only observe finite observations in   
$\overline{\mathcal{L}^{\varphi}_{e}(G)}$.  
In otehr words, it is impossible to observe finite extended strings in 
$\mathcal{L}_e(G) \setminus (\overline{\mathcal{L}^{\varphi}_{e}(G)}$ since they cannot be extended to infinite strings in $\overline{\mathcal{L}^{\varphi}_{e}(G)}$.
Finally, we also define $\mathcal{L}^{\varphi}_{e,F}(G)=\mathcal{L}^{\varphi}_{e}(G) \cap \mathcal{L}_{e,F}^\omega(G)$ as the set of infinite extended faulty strings that are $\varphi$-compatible. 

Note that, how to select atomic propositions $\mathcal{AP}$, labeling function $\textsf{label}: \Sigma_e \to 2^\mathcal{AP}$ and  LTL formula $\varphi$ is application dependent. In Section~\ref{sec:Application}, we will elaborate on how they are selected for different practical scenarios. 
Here, we consider the scenario of permanent sensor failures to illustrate the choices of 
$\mathcal{AP}$,   $\textsf{label}$ and  $\varphi$, and use this scenario as a running example for the latter developments.

\begin{myexm}[Sensor Constraints for Permanent Sensor Failures]\upshape \label{exam:permanent-specification}
	We still consider  system $G_1$ shown in Figure~\ref{fig:System G_1}. 
	However, now we further assume that the reason why event $b$ may become unobservable is due to a \emph{permanent sensor failure}. This scenario can be described as follows. 
	First, we choose the set of atomic propositions as
	\[
	\mathcal{AP}=\{m_0,m_1\}, 
	\]
	where $m_0$ and $m_1$ represent that the sensor for event $b$ is normal and faulty, respectively. 
	Then we define the labeling function $\textsf{label}: \Sigma_e \to 2^\mathcal{AP}$  by:
	for any $\sigma_e=(q,\sigma,o)\in \Sigma_e$, we have 
	\begin{equation}\label{eq:per-example}
		\textsf{label}(\sigma_e)= \left\{
		\begin{array}{cl}
			\{m_0\},     & \text{if }  \sigma=b\wedge o=b\\
			\{m_1\},   & \text{if }  \sigma=b\wedge o=\varepsilon\\
			\emptyset,  &  \text{otherwise}   	
		\end{array} \right..
	\end{equation}
	Then the sensor constraint for the scenario of permanent failures can be written as 
	\begin{equation} \label{eq:permanent equation}
		\varphi_{\textsf{per}}=\Box( m_1 \to \Box \neg m_0).
	\end{equation}
	Intuitively,  formula $\varphi_{\textsf{per}}$ says that whenever atomic proposition $m_1$ holds, which means that sensor for event $b$ fails, atomic proposition $m_0$, which means that sensor reads the occurrence of $b$ normally, cannot hold anymore.  
	
	Then under the above described sensor constraint $\varphi_{\textsf{per}}$, we know that the following extended string is not $\varphi_{\textsf{per}}$-compatible 
	\[
	s_F=(1,f,\varepsilon)(2,a,a)(3,b,\varepsilon)((4,c,c)(2,a,a)(3,b,b) )^\omega 
	\]
	because
	$\textsf{trace}(s_F)=  \emptyset \emptyset \{m_1\} ( \emptyset \emptyset \{m_0\}    )^\omega  \not\models \varphi_{\textsf{per}}$. 
	Therefore, with sensor constraint $\varphi_{\textsf{per}}$, it is impossible to observe finite sequence
	$a c a b$ because 
	\[
	(1,f,\varepsilon)(2,a,a)(3,b,\varepsilon)(4,c,c)(2,a,a)(3,b,b)
	\!\notin\!  \overline{\mathcal{L}^{\varphi}_{e}(G)}. 
	\]
\end{myexm} 

\section{Diagnosability under  Sensor Constraints}\label{sec:Diagnosability under  Sensor Constraints}
In this section, we investigate diagnosability under sensor constraints.
Specifically, we formally propose the notion of $\varphi$-diagnosability as the necessary and sufficient condition of the existence of a diagnoser working correctly for DES subject to sensors  constrained by specification $\varphi$.

First, we modify the existing definition of diagnosability in Definition~\ref{def:classical diag} to a new notion of diagnosability, called  \emph{$\varphi$-diagnosability}, by taking the issues of the LTL sensor constraint $\varphi$ into account. 

\begin{mydef}[$\varphi$-Diagnosability]\label{def:varphi-diagnosability}\upshape 
	System $G$ is said to be $\varphi$-diagnosable w.r.t.\  output function $\mathcal{O}$, fault events $\Sigma_F$ and  sensor constraint $\varphi$ if and only if
	\begin{equation} \label{eq:varphi-diagnosis}
		(\forall s \in \mathcal{L}^{\varphi}_{e,F}(G))(\exists t \in \overline{\{s\}})[ \varphi \textsf{-diag} ]  
	\end{equation}
	where the $\varphi$-diagnostic condition $\varphi \textsf{-diag}$ is 
	\[
	(\forall \omega \in \overline{\mathcal{L}^{\varphi}_{e}(G)} )  
	[\Theta_{\Delta}(w)=\Theta_{\Delta}(t) \Rightarrow    \Sigma_{e,F}\in w].  
	\]  
\end{mydef} 

The above definition says that, for any infinite faulty extended string $s$ satisfying sensor constraint $\varphi$, there is a finite prefix $t$ such that for any finite extended string $\omega$ that are possible as a prefix of some infinite string satisfying the sensor constraint, if the outputs of $\omega$ and $t$ are equivalent, then the extended string $\omega$ must also be faulty. 
Intuitively, $\varphi$-diagnosability modifies the standard diagnosability by restricting our attention only to those infinite extended strings  satisfying the sensor constraint $\varphi$.  
Clearly, by setting $\varphi=\texttt{true}$, our $\varphi$-diagnosability becomes to the standard diagnosability in Definition~\ref{def:classical diag}.

\begin{remark}\upshape
	In Definition \ref{def:classical diag}, it is known that 
	``$\forall s \!\in\! \Psi_e(G)$" and  ``$\exists n\!\in\! \mathbb{N}$"
	can be swapped for finite-state automata, which means that if the system is diagnosable, then there exists a \emph{uniform} detection bound after the occurrence of any fault event. 
	Here since the assumption of LTL sensor constraint is imposed on infinite strings, 
	we can only guarantee that for any infinite faulty string, there exists a finite detection bound. 
	However, this does not imply that there exists a uniform detection bound for all faulty strings. 
	For example, suppose that there is an indicator event that will occur infinite number of times after fault events. If we assume that the sensor constraint $\varphi$ for the indicator event is that it will \emph{not always fail}, i.e., its occurrence will \emph{eventually} be observed correctly, then the system is $\varphi$-diagnosable. However, this condition does not ensure when it will be observed since the satisfaction instant of ``eventually" can be arbitrarily late.  	 
\end{remark}

We use  the following example to illustrate   the notion of $\varphi$-diagnosability. 

\begin{myexm}\label{exm:varphi-Diagnosability}\upshape
	Again, let us consider system $G_1$ shown in Figure~\ref{fig:System G_1} with the same setting in Example~\ref{exam:permanent-specification}. 
	Note that there exists an infinite faulty extended string 
	\[
	s_F=(1,f, \varepsilon)((2,a, a)(3,b,\varepsilon)(4,c,c))^\omega,
	\]
	whose trace  is
	$\textsf{trace}(s_F)=\emptyset ( \emptyset \{m_1\}   \emptyset  )^\omega \models \varphi_{\textsf{per}}$, 
	i,e., $s_F \in \mathcal{L}^{\varphi}_{e,F}(G)$. However,  there exists an infinite normal extended string
	\[s_N=(1,u, \varepsilon)(2,a,a)(3,b,\varepsilon)((4,c,c)(5,a,a))^\omega,\]
	whose trace is
	$\textsf{trace}(s_N)=\emptyset \emptyset \{m_1\} (\emptyset \emptyset)^\omega \models \varphi_{\textsf{per}}$, i.e.,   $s_N \in \mathcal{L}^{\varphi}_{e}(G)$, such that the output of $s_F$ and $s_N$ are the same, i.e., $\Theta_{\Delta}(s_F)=\Theta_{\Delta}(s_N)=(a c)^\omega$. 
	Therefore, for any finite prefix $t \in \overline{\{s_F\}}$, there exists a normal extended string $\omega \in \overline{\{s_N\}}$ such that $\Theta_{\Delta}(t)=\Theta_{\Delta}(\omega)$. 
	Namely, there exists an infinite faulty extended string compatible to the sensor constraint $\varphi_{\textsf{per}}$ such that for each of its prefix, there exists a normal extended string having the same output with the prefix. 
	By Definition \ref{def:varphi-diagnosability}, system $G_1$ is not $\varphi_{\textsf{per}}$-diagnosable.
\end{myexm}

Next, we show that the proposed notion of  $\varphi$-diagnosability indded provides the necessary and sufficient condition for the existence of a diagnoser that works ``correctly" under sensor constraint $\varphi$. 
Formally, a diagnoser is a function
\[
D: \mathcal{O}(\mathcal{L}(G)) \to \{0,1\}
\]
that decides whether a fault has happened (by issuing ``$1$") or not (by issuing ``$0$") based on the output string. 
We say that a diagnoser works correctly under the sensor constraint $\varphi$ if it satisfies the following conditions:
\begin{enumerate}[C1)]
	\item 
	The diagnoser will evetually issue a fault alarm for any occurrence of fault events, i.e.,
	\[
	(\forall s  \in \mathcal{L}^{\varphi}_{e,F}(G))(\exists t \in \overline{\{s\}}))[D(\Theta_{\Delta}(t))=1].\label{C1}
	\]
	\item 
	The diagnoser will not issue a false alarm if the execution is still normal, i.e., 
	\[
	(\forall s \in \overline{\mathcal{L}^{\varphi}_{e}(G)}: \Sigma_{e,F} \notin s) [D(\Theta_{\Delta}(s)) = 0].
	\label{C2}
	\]
\end{enumerate}

The following theorem says that there exists a diagnoser which works ``correctly" under sensor constraint $\varphi$ if and only if the system is  $\varphi$-diagnosable.
\begin{mythm}\label{thm:varphi-existence}
	There exists a diagnoser  satisfying conditions C\ref{C1} and C\ref{C2} 
	if and only if $G$ is $\varphi$-diagnosable w.r.t.\  fault events $\Sigma_F$, output function $\mathcal{O}$ and sensor  constraint $\varphi$.
\end{mythm}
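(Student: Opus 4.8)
The plan is to prove both directions of the equivalence by treating the diagnoser $D$ as the concrete realization of the abstract existential quantifiers in the definition of $\varphi$-diagnosability. The key observation linking the two sides is that condition C\ref{C1} asks for a finite prefix after which a fault alarm is raised, which mirrors the ``$\exists t\in\overline{\{s\}}$'' in Equation~\eqref{eq:varphi-diagnosis}, while condition C\ref{C2} (no false alarm on normal prefixes) is exactly what forces the diagnostic condition $\varphi\textsf{-diag}$ to hold, since a correct alarm on $t$ can only be issued if every equally-observed $\varphi$-compatible prefix is already faulty.

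For the ``if'' direction (assume $\varphi$-diagnosability, construct a correct diagnoser), I would define $D$ explicitly on observed output strings: for any observed string $\beta\in\mathcal{O}(\mathcal{L}(G))$, set $D(\beta)=1$ if and only if \emph{every} extended string $\omega\in\overline{\mathcal{L}^{\varphi}_{e}(G)}$ with $\Theta_{\Delta}(\omega)=\beta$ satisfies $\Sigma_{e,F}\in\omega$; otherwise $D(\beta)=0$. Condition C\ref{C2} is then immediate, because a normal prefix $s$ is itself a witness $\omega$ with $\Sigma_{e,F}\notin\omega$, forcing $D(\Theta_{\Delta}(s))=0$. For C\ref{C1}, I would take any $s\in\mathcal{L}^{\varphi}_{e,F}(G)$, invoke $\varphi$-diagnosability to obtain the finite prefix $t\in\overline{\{s\}}$ satisfying $\varphi\textsf{-diag}$, and observe that $\varphi\textsf{-diag}$ is precisely the defining condition for $D(\Theta_{\Delta}(t))=1$.

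For the ``only if'' direction (assume a correct diagnoser exists, derive $\varphi$-diagnosability), I would argue by contraposition or directly: fix $s\in\mathcal{L}^{\varphi}_{e,F}(G)$. By C\ref{C1} there is a prefix $t\in\overline{\{s\}}$ with $D(\Theta_{\Delta}(t))=1$. I claim this same $t$ witnesses $\varphi\textsf{-diag}$. Suppose not; then some $\omega\in\overline{\mathcal{L}^{\varphi}_{e}(G)}$ has $\Theta_{\Delta}(\omega)=\Theta_{\Delta}(t)$ but $\Sigma_{e,F}\notin\omega$. Applying C\ref{C2} to this normal $\omega$ gives $D(\Theta_{\Delta}(\omega))=0$, while $D(\Theta_{\Delta}(t))=1$; since $\Theta_{\Delta}(\omega)=\Theta_{\Delta}(t)$ these contradict the fact that $D$ is a well-defined function of the output string. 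Hence $t$ witnesses the $\varphi$-diagnostic condition, establishing Equation~\eqref{eq:varphi-diagnosis}.

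The main subtlety I anticipate is the domain alignment between $D$, which is defined on output strings $\mathcal{O}(\mathcal{L}(G))$, and the extended strings in $\overline{\mathcal{L}^{\varphi}_{e}(G)}$ over which $\varphi\textsf{-diag}$ quantifies. The decisive point is that $D$ sees \emph{only} $\Theta_{\Delta}$, so two extended strings with the same observation are indistinguishable to $D$; this is exactly what makes the ``only if'' contradiction go through and is the conceptual crux of the proof. I would also need to check that the prefix $t$ supplied by C\ref{C1} indeed lies in $\overline{\mathcal{L}^{\varphi}_{e}(G)}$ (it does, being a finite prefix of the $\varphi$-compatible infinite string $s$), so that it is a legitimate argument on which $D$ and $\varphi\textsf{-diag}$ agree. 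Beyond these bookkeeping checks the argument is a direct translation between the two formulations rather than a deep combinatorial construction.
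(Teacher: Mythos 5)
Your proposal is correct and follows essentially the same route as the paper: the same canonical diagnoser (alarm iff every equally-observed $\varphi$-compatible prefix is faulty) for one direction, and for the other direction the same core contradiction that a normal $\omega$ with $\Theta_{\Delta}(\omega)=\Theta_{\Delta}(t)$ forces $D$ to output $0$ and $1$ on the same observation; the paper merely phrases that direction as a contradiction against $\varphi$-diagnosability rather than extracting the witness $t$ directly. Your bookkeeping remarks (that $t\in\overline{\mathcal{L}^{\varphi}_{e}(G)}$ and that $D$ sees only the output string) are exactly the points the paper relies on implicitly.
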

\begin{proof}
	All proofs hereafter are provided in the Appendix.  
\end{proof} 

\section{Verification of $\varphi$-diagnosability}\label{sec:Verification of varphi-diagnosability}
In this section, we focus on the verification of $\varphi$-diagnosability. 
Specifically, we provide a verifiable necessary and sufficient condition for $\varphi$-diagnosability based on the verification system.  

\subsection{Augmented System} 
To verify $\varphi$-diagnosability, our first step is to augment both the state-space and the event-space of $G$ such that 
\begin{itemize}
	\item 
	the information of whether or not a fault event has occurred is encoded in the \emph{augmented state-space}; and 
	\item 
	the information of which state the event is enabled from and which specific output is observed are encoded in the \emph{augmented event-space}. 
\end{itemize} 
\begin{mydef}[Augmented Systems]\upshape
Given  system $G = (Q,q_0,\Sigma,\delta)$, fault events $\Sigma_F$ and output function $\mathcal{O}$, we define the \emph{augmented system} as a new-tuple
\begin{equation} \label{eq:Augment System}
	\tilde{G}=(\tilde{Q},\tilde{q}_0,\Sigma_e,\tilde{\delta}),
\end{equation}
where 
\begin{itemize}
	\item 
	$\tilde{Q} \subseteq Q \times \{F,N\}$ is the set of augmented states;
	\item 
	$\tilde{q}_0=(q_0,N)$ is the initial augmented state;
	\item 
	$\Sigma_e$ is the set of augmented events, which are just extended events;
	\item 
	$\tilde{\delta}:\tilde{Q} \times \Sigma_e \rightarrow \tilde{Q}$ is the transition function defined by: 
	for any $\tilde{q}=(q,l)\in \tilde{Q}$ and $\tilde{\sigma}=(q,\sigma,o) \in \Sigma_e$, we  have 
	$\tilde{\delta}(\tilde{q},\tilde{\sigma})!$ whenever  	$f(q,\sigma)!$ and $o\in \mathcal{O}(q,\sigma)$. 
	Furthermore, when $\tilde{\delta}(\tilde{q},\tilde{\sigma})!$, we have  
	\[
	\tilde{\delta}(\tilde{q},\tilde{\sigma})=
	\left\{ 
	\begin{array}{l l}
		(\delta(q,\sigma),N) &\text{ if } l=N\wedge \tilde{\sigma} \!\notin\! \Sigma_{e,F} \\ 
		(\delta(q,\sigma),F) &\text{ otherwise }
	\end{array} 
	\right..  
	\] 
\end{itemize}
\end{mydef} 

The above constructed augmented system $\tilde{G}$ has the following properties:
\begin{itemize}
	\item 
	First, the augmented system $\tilde{G}$ generates extended strings. 
	Essentially, it still tracks the original dynamic of the system by putting both the output realization and the current state information together with the internal event. 
	Therefore, we have $\mathcal{L}(\tilde{G})=\mathcal{L}_e(G)$.
	\item 
	Second, each augmented state $(q,l) \in \tilde{Q}\subseteq Q \times \{F,N\}$ has two components. 
	The first component $q$ is the actual state in the original system $G$ and the second component $l \in \{N,F\}$ is a label denoting whether fault events have occurred. 
	By the construction, 
	the label will change from $N$ to $F$ only when an extended fault event occurs and 
	once the label becomes $F$, it will be $F$ forever. 
	We denote by $\tilde{Q}_N=\{(q,l)\in \tilde{Q}: l = N\}$ the set of normal augmented states and the set of faulty states  $\tilde{Q}_F$ is defined analogously.  
\end{itemize}

\begin{myexm}\upshape
	Still, we consider system $G_1$ in Figure~\ref{fig:System G_1}, which has been   discussed in Example \ref{exam:permanent-specification}. 
	Its augmented system $\tilde{G}_1$ of $G_1$ is  depicted in Figure~\ref{fig:Augment system G_1}, 
	where all states reachable via extended fault event $(1,f, \varepsilon)$  are augmented with label $F$ and we denote every extended event $(q,\sigma,o)\in \Sigma_e$ by $\sigma^o_q$. 
	Furthermore, the transitions of  $\tilde{G}_1$ are defined according to the actual transitions and the underlying observations of the original system $G_1$. For example, because $\delta(3,b)=4$ and $\mathcal{O}(3,b)=\{b,\varepsilon\}$, we have two new transitions in $\tilde{G}_1$: $\tilde{\delta}(3F, (3,b,b))=4F$ and $\tilde{\delta}(3F, (3,b,\varepsilon))=4F$; each for them represents a different observation realization. 
\end{myexm}

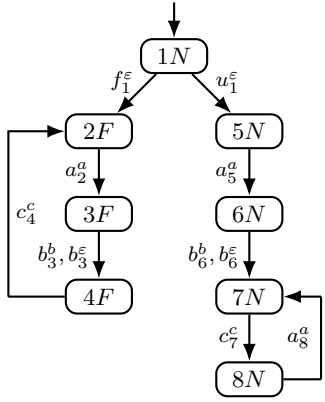
\begin{figure} 
	\centering
	\begin{tikzpicture}[->,>={Latex}, thick, initial text={}, initial where=above, thick, node distance=1.1cm, base node/.style={rectangle, align = center, draw, minimum height=3mm, rounded corners =1.5mm, font=\small}]		
		\node[initial, state, base node, ] (1) at (1.5,5) {$1N$}; 
		\node[state, base node, ](2) at (0.5,4) {$2F$};
		\node[state, base node, ](3)  [below of=2] {$3F$};
		\node[state, base node, ](4)  [below of=3] {$4F$};
		\node[state, base node, ](5) at (2.5,4) {$5N$};
		\node[state, base node, ](6)  [below of=5] {$6N$};
		\node[state, base node, ](7)  [below of=6] {$7N$};
		\node[state, base node, ](8)  [below of=7] {$8N$};
		
		\path[]
		(1) edge node [left, xshift=0.1cm, yshift=0.15cm] {\fontsize{8}{1} ${f}_1^\varepsilon$} (2)
		(1) edge node [right, xshift=-0.2cm, yshift=0.15cm] {\fontsize{8}{1} $u_{1}^\varepsilon$} (5)
		(2) edge node [left] {\fontsize{8}{1} $a_{2}^a$} (3)
		(3) edge node [left] {\fontsize{8}{1} $b_{3}^b, b_{3}^\varepsilon$} (4);
		
		\draw (4) -- ($(4.west)+(-0.75,0)$)
		($(4.west)+(-0.75,0)$) -- node [midway,right, xshift=-0.15cm] {\fontsize{8}{1} $c_{4}^c$} 
		($(2.west)+(-0.75,0)$)
		-- (2.west);
		
		\path[]
		(5) edge node [midway,left] {\fontsize{8}{1} $a_{5}^a$} (6)
		(6) edge node [midway,left] {\fontsize{8}{1} $b_{6}^b,b_{6}^\varepsilon$} (7)
		(7) edge node [left ] {\fontsize{8}{1} $ c_{7}^c$} (8);
		
		\draw
		(8) -- ($(8.east)+(0.5,0)$)
		($(8.east)+(0.5,0)$) -- node [midway,left] {\fontsize{8}{1} $a_{8}^a $} ($(7.east)+(0.5,0)$)
		($(7.east)+(0.5,0)$) -- (7.east);
		
	\end{tikzpicture}
		\caption{\label{fig:Augment system G_1}Augment system $\tilde{G}_1$ of system $G_1$. For simplicity, we also use $\sigma_q^o$ to denote the extended event $(q, \sigma, o) \in \Sigma_e$.}
	\end{figure}

\subsection{Observation Constrained  System} 
Recall that for any LTL formula $\varphi$, there exists an NBA $\mathcal{B}=(X, X_0,2^\mathcal{AP},\xi, X_m)$ such that $\mathcal{L}^\omega_m(\mathcal{B})=\textsf{word}(\varphi)$.
In order to capture all possible extended strings that can be observed under sensor constraint $\varphi$, we construct the  observation constrained system. 

\begin{mydef}[Observation Constrained System]\upshape
Given  augmented system $\tilde{G}=(\tilde{Q},\tilde{q}_0,\Sigma_e,\tilde{\delta})$
and NBA $\mathcal{B}=(X, X_0,2^\mathcal{AP},\xi, X_m)$ associated with LTL formula $\varphi$, the  \emph{observation constrained system} is defined as a new-tuple
\begin{equation} \label{eq:CPS}
	T=(Q_T,\Sigma_e,\delta_T,Q_{0,T},Q_{m,T}),
\end{equation}
where 
\begin{itemize}
	\item 
	$Q_T \subseteq \tilde{Q} \times X$ is the set of states;
	\item 
	$\Sigma_e$ is still the set of extended events;
	\item 
	$\delta_T:Q_T \times \Sigma_e \rightarrow 2^{Q_T}$ is the non-deterministic transition function defined by: 
	for any $q_T=(\tilde{q},x) \in Q_T$ and $\sigma_e \in \Sigma_e$, we have
	\[
	\delta_T(q_T,\sigma_e)=
	\left\{
	(\tilde{q}',x') : \!\!\!\!\!
	\begin{array}{cc}
		& \tilde{q}'= \tilde{\delta}(\tilde{q},\sigma_e) \text{ and }\\
		&  x' \in \xi(x,\textsf{label}(\sigma_e))
	\end{array}
	\right\}
	\] 
	\item 
	$Q_{0,T}=\{\tilde{q}_0\} \times X_0$ is the set of initial states;
	\item
	$Q_{m,T}=\{ (\tilde{q},x) \in Q_T: x \in X_m\}$ is the set of accepting states.
\end{itemize}
\end{mydef}

Intuitively, the observation constrained system $T$ is constructed by synchronizing the augmented system $\tilde{G}$ with the NBA  $\mathcal{B}$ associated with $\varphi$ according to the atomic propositions. 
Specifically, for any states $q_T=(\tilde{q},x),q_T'=(\tilde{q}',x') \in \tilde{Q} \times X$ and extended event $\sigma_e \in \Sigma_e$, we have $q'_T \in \delta_T(q_T, \sigma_e)$ whenever
(i) in the  first  (plant model) component, the event itself satisfies the dynamic of the system, i.e., $\tilde{q}'=\tilde{\delta}(\tilde{q},\sigma_e)$; and 
(ii) in the second (LTL formula) component, the atomic propositions of the event satisfies the transition rules of the B\"{u}chi automaton, i.e., 
$x' \in \xi(x,\textsf{label}(\sigma_e))$.  
Therefore, for any  string $s=\sigma_1\sigma_2\cdots \sigma_n\in  \mathcal{L}(T)$, 
we have $s\in \mathcal{L}_e(G)=\mathcal{L}(\tilde{G})$ 
and $\textsf{trace}(s) \in \mathcal{L}(\mathcal{B})$. 
Furthermore, by construction, a state $(\tilde{q},x) \in Q_T$ is accepting iff its second  component $x\in X_m$ is  accepting state in NBA $\mathcal{B}$. 
Therefore,  $T$ essentially recognizes all infinite strings in $G$ satisfying the sensor constraint $\varphi$, i.e., 
\begin{equation}
	\mathcal{L}^\omega_m(T)= \mathcal{L}^{\varphi}_{e}(G).
\end{equation} 

In Definition \ref{def:varphi-diagnosability}, we need to compare the observations of  prefixes of those strings in $\mathcal{L}^{\varphi}_{e,F}(G)$ 
with finite strings in $\overline{\mathcal{L}^{\varphi}_{e}(G)}$.  
Using observation constrained system $T$, we know that 
finite string $s\in \overline{\mathcal{L}^{\varphi}_{e}(G)}$ iff its trace can reach a state in $T$ from which accepting states can be visited infinitely often. 
To this end, we say a state $q\in Q_T$ in $T$ is \emph{feasible} if 
$\mathcal{L}_m^\omega( T_q )\neq \emptyset$, 
where $T_q=(Q_T,\Sigma_e,\delta_T,\{q\},Q_{m,T}) $ is the NBA by setting the initial state of $T$ as state $q$. 
We denote by $Q_{feas,T}\subseteq Q_T$ the set of feasible states. 
Then we  have the following equivalence
\begin{equation}\label{eq:feasible}
 s \in \overline{\mathcal{L}^{\varphi}_{e}(G)} 
 \Leftrightarrow
 \exists q_0 \in Q_{0,T},
 \delta_T( q_0,s) \cap Q_{feas,T}   \neq \emptyset.   \neq \emptyset.
\end{equation} 

Also, we denote by $Q_{N,T}=\{ (q, N, x) \in Q_T \}$ and $Q_{F,T}=\{ (q, F, x) \in Q_T \}$ as the set of normal and faulty states in $T$, respectively. 
Then for any $s \in \overline{\mathcal{L}^{\varphi}_{e}(G)}$, we also have the following equivalence
\begin{equation}
	\Sigma_{e,F}\in s  
	\Leftrightarrow \exists q_0 \in Q_{0,T},
	\delta_T( q_0,s) \cap Q_{F,T} \neq \emptyset.
\end{equation} 
We illustrate the above concepts by the following example.

\begin{myexm}\upshape
	Let us still consider system $G_1$ in Figure~\ref{fig:System G_1}, whose augmented system $\tilde{G}_1$ has been given in Figure~\ref{fig:Augment system G_1}. 
	We still consider sensor constraint $\varphi_{\textsf{per}}$ in Equation~\eqref{eq:permanent equation}, 
	which can be translated to NBA $\mathcal{B}_{\textsf{per}}$ as shown in Figure~\ref{fig:Permanent specification}. 
	Based on   $\tilde{G}_1$ and  $\mathcal{B}_{\textsf{per}}$, we construct the observation constrained system 
	$T_1$ as shown in Figure~\ref{fig:observation constrained system-permanent}, where we omit double circles as all states are accepting.  
	Since all states are accepting and the system is live, 
	we know that all states in $T_1$ are feasible. 
	For example, finite string
	$s=(1,f,\varepsilon)(2,a,a)(3,b,\varepsilon)(4,c,c)(2,a,a)(3,b,b)\in \mathcal{L}(\tilde{G})$, 
	we have $s\notin \mathcal{L}(T)$ 
	since proposition $m_1$ has already been satisfied by $(3,b,\varepsilon)$  and 
	proposition $m_0$ cannot hold thereafter, i.e., extended event $(3,b,b)$ cannot be synchronized with  $\mathcal{B}_{\textsf{per}}$ when constructing $T_1$.  
	Hence, we know that $s\notin\overline{\mathcal{L}^\varphi_e(G)}$.
\end{myexm}

\begin{figure} 
	\centering
	\centering
	\begin{tikzpicture}[->,>=stealth,shorten >=1pt,auto,node distance=2.3cm,
		thick,base node/.style={circle,draw,minimum size=0.5mm}, font=\small]
		\node[initial,initial text={}, state, accepting,base node] (A) {$A$};
		\node[state, accepting,base node](B)[right of=A ]{$B$}[above];
		
		\path[]
		(A) edge [loop above] node {\fontsize{8}{1}$\neg m_1$} (A)
		(A) edge node {\fontsize{8}{1}$\neg m_0 \wedge m_1$} (B)
		(B) edge [loop above] node {\fontsize{8}{1}$\neg m_0$} (B);
	\end{tikzpicture}
		\caption{\label{fig:Permanent specification}NBA $\mathcal{B}_{\textsf{per}}$ for specification $\varphi_{\textsf{per}}=\Box( m_1 \to \Box \neg m_0)$ with $\mathcal{AP}=\{m_0,m_1\}$, where accepting states are highlighted by double circles. 
			Here, we follow  the standard abbreviation for drawing NBA over alphabet $2^{\mathcal{AP}}$. 
			For example,  transition $\neg m_1$ is the abbreviation of $\{\emptyset, m_0\}$ and 
			transition $\neg m_0 \wedge m_1$ is the abbreviation of $\{ m_1\}$.}
\end{figure}
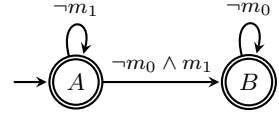
	
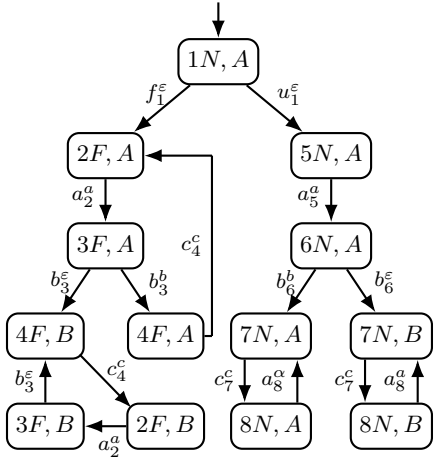
\begin{figure} 
	\centering
	\begin{tikzpicture}[->,>={Latex}, thick, initial text={}, node distance=1.2cm, initial where=above, thick, base node/.style={rectangle, align = center, draw, minimum height=6mm, rounded corners =1.5mm, font=\small }]		
	\node[initial, state, base node] (1) at (2.5,7) {$1N,A$}; 
	\node[state, base node](2) at (1,5.75) {$2F,A$};
	\node[state, base node](3) [below of=2] {$3F,A$};
	\node[state, base node](4-B) [below of=3, xshift=-0.8cm] {$4F,B$};
	\node[state, base node](4-A) [below of=3, xshift=0.8cm] {$4F,A$};
	\node[state, base node](3-B) [below of=4-B] {$3F,B$};
	\node[state, base node](2-B) [below of=4-A] {$2F,B$};
	\node[state, base node](5) at (4,5.75) {$5N,A$};
	\node[state, base node](6) [below of=5] {$6N,A$};
	\node[state, base node](7) [below of=6, xshift=-0.8cm] {$7N,A$};
	\node[state, base node](8) [below of=7] {$8N,A$};
	\node[state, base node](7-B) [below of=6, xshift=0.8cm] {$7N,B$};
	\node[state, base node](8-B) [below of=7-B] {$8N,B$};
	\path[]
	(1) edge node [left, pos=0.2] {\fontsize{8}{1} $f_1^\varepsilon$} (2)
	(1) edge node [right, pos=0.2] {\fontsize{8}{1} $u_{1}^\varepsilon$} (5)
	(2) edge node [left, pos=0.4] {\fontsize{8}{1} $a_{2}^a$} (3)
	(3) edge node [right, pos=0.3] {\fontsize{8}{1} $b_{3}^b$} (4-A)
	(3) edge node [left, pos=0.3] {\fontsize{8}{1} $b_{3}^\varepsilon$} (4-B)
	(3-B) edge node [left, pos=0.5] {\fontsize{8}{1} $b_{3}^\varepsilon$} (4-B);
	\draw ($(4-B.south east)+(-0.068,0.068)$)--node [right, pos=0.3, xshift=-0.1cm, yshift=-0.1] {\fontsize{8}{1} $c_{4}^c$} ($(2-B.north west)+(0.068,-0.068)$);
		
	\draw (2-B) -- node [below, pos=0.5] {\fontsize{8}{1} $a_{2}^a$} (3-B);
			
	\draw (4-A) -- ($(4-A.east)+(0.1,0)$)
	($(4-A.east)+(0.1,0)$)-- node [left, pos=0.5] {\fontsize{8}{1} $c_{4}^c$} ($(4-A.east)+(0.1,0) + (5) -(6)+(5) -(6)$)
	($(4-A.east)+(0.1,0) + (5) -(6)+(5) -(6)$)--(2);
			
	\path[]
	(5) edge node [left, pos=0.4] {\fontsize{8}{1} $a_{5}^a$} (6);
	\path[]
	(6) edge node [left, pos=0.3] {\fontsize{8}{1} $b_{6}^b$} (7)
	($(7.south)-(0.35,0)$) edge node [left, pos=0.5] {\fontsize{8}{1}  $c_{7}^c$} ($(8.north)-(0.35,0)$)
	($(8.north)+(0.35,0)$) edge node [left, pos=0.5] {\fontsize{8}{1} $a_{8}^\alpha$} ($(7.south)+(0.35,0)$);
	\path[]
	(6) edge node [right, pos=0.3] {\fontsize{8}{1} $b_{6}^\varepsilon$} (7-B)
	($(7-B.south)-(0.35,0)$) edge node [left, pos=0.5] {\fontsize{8}{1} $ c_{7}^c$} ($(8-B.north)-(0.35,0)$)
	($(8-B.north)+(0.35,0)$) edge node [left, pos=0.5] {\fontsize{8}{1} $a_{8}^a$} ($(7-B.south)+(0.35,0)$);
	\end{tikzpicture}
	\caption{\label{fig:observation constrained system-permanent} The observation constrained system $T_1$ for system $\tilde{G_1}$ with sensor  constraint $\varphi_{\textsf{per}}$, where all states are accepting.}
\end{figure}

\subsection{Verification Structure}	
According to Definition \ref{def:varphi-diagnosability}, a system is \emph{not} $\varphi$-diagnosable iff there exists an infinite extended faulty string $s \in \mathcal{L}^{\varphi}_{e,F}(G)$ such that for any prefix $t \in \overline{\{s\}}$, there is an extended normal string $\omega \in \overline{\mathcal{L}^\varphi_e(G)}$ having the same output with $t$. 
Motivated by this observation, we construct the  verification system  to capture all pairs of extended faulty strings and extended normal strings  that have the same  outputs and both satisfy the sensor constraint.  

\begin{mydef}[Verification System]\upshape 
Given system $G$ and sensor constraint $\varphi$,  
its verification system   is a new tuple 
\begin{equation}\label{eq:verification system}
	V=(Q_V,\Sigma_V,\delta_V,Q_{0,V},Q_{m,V}),
\end{equation}
where
\begin{itemize}
	\item 
	$Q_V \subseteq Q_T \times Q_T$ is the finite set of states;
	\item 
	$\Sigma_V=\Sigma^{o}_V \cup \Sigma^{uo}_V$ is the finite set of events, where 
	\begin{itemize}
		\item 
		$\Sigma^{o}_V=\{(\sigma_1,\sigma_2) \!\in\! \Sigma_e \!\times\! \Sigma_e: \Theta_{\Delta}(\sigma_1)\!=\!\Theta_{\Delta}(\sigma_2) \!\neq\! \varepsilon\}$;
		\item $\Sigma^{uo}_V=\{(\sigma_1,\varepsilon) \!\in\! \Sigma_e \!\times\! \{ \varepsilon\}: \Theta_{\Delta}(\sigma_1)\!=\!\varepsilon \} \cup \{(\varepsilon,\sigma_2) \!\in \!\{ \varepsilon\} \!\times\! \Sigma_e: \Theta_{\Delta}(\sigma_2)\!=\!\varepsilon \}$;
	\end{itemize}
	\item 
	$\delta_V:Q_V \times \Sigma_V \rightarrow 2^{Q_V}$ is the non-deterministic transition function defined by:  	for any $q_V=(q_1,q_2) \in Q_V$ and $\sigma_V=(\sigma_1, \sigma_2) \in \Sigma_V$, we have	
	\[
	\delta_V(q_V,\sigma_V)=   \delta_T(q_1,\sigma_1)  \times  \delta_T(q_2,\sigma_2); 
	\] 
	\item 
	$Q_{0,V}= Q_{0,T} \times Q_{0,T}$ is the set of initial states;
	\item 
	$Q_{m,V} \subseteq Q_V$ is the set of accepting states defined by 
	\[
	Q_{m,V}=\left\{( q_1 , q_2 ) \in Q_V:  \!\!\!
	\begin{array}{c c}
		q_1 \in    Q_{m,T} \cap Q_{F,T}  \\
	\text{and }	q_2 \in   Q_{feas,T}\cap Q_{N,T}
	\end{array}\!\!
	\right\}.
	\] 
\end{itemize}
\end{mydef} 

Intuitively, 
each state $q_V=(q_1, q_2)$ in the verification system $V$ is a pair of states in system $T$.  
The event set $\Sigma_V$  is divided into two categories: $\Sigma_V=\Sigma^{o}_V \cup \Sigma^{uo}_V$: 
event $(\sigma_1,\sigma_2)$ is in $\Sigma^{o}_V$ if both $\sigma_1$ and $\sigma_2$ have the same non-empty output 
and
event $(\sigma_1,\sigma_2)$  is  in $\Sigma^{uo}_V$  if  the output of $\sigma_1 (\sigma_2)$ is empty and $\sigma_2 (\sigma_1)$ is empty.  
Essentially, $V$ is obtained by synchronizing $T$ with its copy according to their outputs.  
Then for event $\sigma_V=(\sigma_1,\sigma_2)$, we 
denote by  $\theta_1(\sigma_V)=\sigma_1$ and $\theta_2(\sigma_V)=\sigma_2$ its first and second components, respectively; 
the notation is also   extended to a string $s=\sigma^1_V \sigma^2_V \cdots \in \Sigma^*_V \cup \Sigma^\omega_V$ 
by 
$\theta_1(s)=\theta_1(\sigma^1_V) \theta_1(\sigma^2_V)\cdots$ and
$\theta_2(s)=\theta_2(\sigma^1_V) \theta_2(\sigma^2_V)\cdots$.
By the construction, $T$ has the following properties: 
\begin{itemize}
	\item 
	For any $s \in \mathcal{L}(V)$, 
	we have   $s_1=\theta_1(s), s_2=\theta_2(s)\in \mathcal{L}(T)$ and  $\Theta_{\Delta}(s_1)=\Theta_{\Delta}(s_2)$;
	\item 
	For any pair of extended strings $s_1,s_2 \in \mathcal{L}(T)$ such that $\Theta_{\Delta}(s_1)=\Theta_{\Delta}(s_2)$, 
	there exits a string $s \in \mathcal{L}(V)$ such that $\theta_1(s)=s_1$ and $\theta_2(s)=s_2$.
\end{itemize}

The accepting conditions in  $Q_{m,V}$ are explained as follows. 
A state $q_V=(q_1,q_2)$ is accepting if
(i) its first component $q_1$ is a  faulty accepting state in $T$; 
and 
(ii) its second component $q_2$ is normal and feasible. 
The first condition says that by repeatedly visiting state $q_V\in Q_{m,V}$, 
the first component of the string will be an infinite faulty string satisfying $\varphi$. 
The second condition says that for any finite string reaching $q_V\in Q_{m,V}$, the second component of the string is a normal string in 
$\overline{ \mathcal{L}^\varphi_e(G) }$.  

\subsection{Checking $\varphi$-Diagnosability}
Now, we present how to verify $\varphi$-diagnosability using the verification system $V$.  To this end, we first introduce some concepts. 
A \emph{run} in $V$ is a finite sequence 
\begin{equation}\label{eq:run}
	\pi=q_V^1 \xrightarrow{\sigma_V^1} q_V^2 \xrightarrow{\sigma_V^2} \cdots \xrightarrow{\sigma_V^{n-1}} q_V^n, 
\end{equation}
where $q_V^i \in Q_V, \sigma_V^i \in \Sigma_V$ and $q_V^{i+1} \in \delta_V(q_V^i,\sigma_V^i)$.  
A run  $\pi$  of Equation~\eqref{eq:run} is called a \emph{cycle} if $q_V^1=q_V^n$;  
a cycle $\pi$ is said to be \emph{reachable} if  there exists a finite string $s \in \mathcal{L}(V)$ and $q_{0,V} \in Q_{0,V}$ such that $q_V^1 \in f_V(q_{0,V},s)$.  

We are now ready to present the main  theorem for verifying $\varphi$-diagnosability.

\begin{mythm}\label{thm:verify varphi-diag}
System $G$ is not $\varphi$-diagnosable w.r.t.\ fault events $\Sigma_F$, output function $\mathcal{O}$ and sensor constraint $\varphi$, if and only if, in the verification system $V$, there exists a reachable cycle
	\[
	\pi = q_V^1 \xrightarrow{\sigma_V^1} q_V^2 \xrightarrow{\sigma_V^2} \cdots \xrightarrow{\sigma_V^{n-1}} q_V^n
	\]
	such that  
	\begin{enumerate}
		\item 
		$\theta_1(\sigma_V^i) \neq \varepsilon$  for some $i=1,\dots,n$; and 
		\item 
		$q_V^{j} \in Q_{m,V}$ for some $j=1,\dots,n$. 
	\end{enumerate} 
\end{mythm}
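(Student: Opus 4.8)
The plan is to establish the biconditional by translating the semantic characterization of non-$\varphi$-diagnosability (stated just before the theorem) into the language of reachable cycles in $V$. Recall that $G$ is not $\varphi$-diagnosable iff there is an infinite extended faulty string $s \in \mathcal{L}^{\varphi}_{e,F}(G)$ such that every prefix $t \in \overline{\{s\}}$ admits a normal string $\omega \in \overline{\mathcal{L}^{\varphi}_{e}(G)}$ with $\Theta_{\Delta}(\omega)=\Theta_{\Delta}(t)$. The two structural conditions on the cycle encode exactly this: condition~(2), that some $q_V^j \in Q_{m,V}$, says the first component is a faulty accepting (hence $\varphi$-compatible) state while the second component is normal and feasible (hence a prefix of some $\varphi$-compatible string); condition~(1), that $\theta_1(\sigma_V^i)\neq \varepsilon$ for some transition in the cycle, ensures the first component makes genuine progress so that repeating the cycle yields a \emph{genuinely infinite} faulty string rather than one stalled after finitely many internal events.

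For the ``only if'' direction, I would start from a witness string $s$ for non-$\varphi$-diagnosability. Using the synchronization properties of $V$ recorded in the excerpt (for any pair $s_1,s_2 \in \mathcal{L}(T)$ with equal outputs there is a run of $V$ projecting to them, and vice versa), I would lift the pair $(s,\omega)$ — where $\omega$ ranges over the matching normal prefixes — to an infinite run of $V$ whose first component realizes $s$ and whose second component stays in $Q_{N,T}\cap Q_{feas,T}$. Because $s\in\mathcal{L}^{\varphi}_{e,F}(G)=\mathcal{L}^\omega_m(T)$, its accepting run visits $Q_{m,T}\cap Q_{F,T}$ infinitely often, so the lifted run visits $Q_{m,V}$ infinitely often. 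Since $Q_V$ is finite, some accepting state recurs along an infinite suffix, yielding a reachable cycle satisfying~(2); condition~(1) follows because $s$ is a bona fide infinite string over $\Sigma_e$, so infinitely many of its transitions carry a non-$\varepsilon$ first component, and I can choose the recurrence window to include one.

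For the ``if'' direction, given a reachable cycle $\pi$ meeting~(1) and~(2), I would pump it: let $s_{\text{pre}}\in\mathcal{L}(V)$ reach $q_V^1$, and form the infinite run $s_{\text{pre}}\,\pi^\omega$. Projecting via $\theta_1$ gives an infinite string whose first component, by~(1), contains infinitely many events and passes through $Q_{m,T}\cap Q_{F,T}$ infinitely often, so $\theta_1(s_{\text{pre}}\pi^\omega)\in\mathcal{L}^\omega_m(T)=\mathcal{L}^{\varphi}_{e}(G)$ and is faulty, i.e.\ lies in $\mathcal{L}^{\varphi}_{e,F}(G)$. The $\theta_2$-projection of every prefix reaches $Q_{N,T}\cap Q_{feas,T}$, hence by the feasibility equivalence~\eqref{eq:feasible} is a normal string in $\overline{\mathcal{L}^{\varphi}_{e}(G)}$ with the same output as the corresponding $\theta_1$-prefix. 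This exhibits a faulty $\varphi$-compatible string every prefix of which has a normal look-alike, which is precisely the negation of Definition~\ref{def:varphi-diagnosability}.

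The main obstacle I anticipate is handling the $\varepsilon$-components carefully. The event set $\Sigma_V$ permits moves where one component is $\varepsilon$ (the $\Sigma^{uo}_V$ events), so $\theta_1$ and $\theta_2$ of a run need not advance in lockstep and a $\theta$-projection of an infinite run could in principle be finite. Condition~(1) is exactly the safeguard ruling this out for the first component, and I must argue symmetrically — using liveness of $G$ and the structure of $Q_{feas,T}$ — that the second-component projections extend to genuine infinite $\varphi$-compatible strings so that the normal witnesses $\omega$ really lie in $\overline{\mathcal{L}^{\varphi}_{e}(G)}$. Verifying that pumping the cycle preserves both the Büchi acceptance on the first coordinate and the feasibility (non-blocking-to-acceptance) property on the second coordinate is the delicate bookkeeping step.
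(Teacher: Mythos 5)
Your ``if'' direction (cycle $\Rightarrow$ not $\varphi$-diagnosable) is essentially the paper's argument: pump the reachable cycle, use condition~(1) to guarantee the $\theta_1$-projection is genuinely infinite, use the accepting state to place it in $\mathcal{L}^{\varphi}_{e,F}(G)$, and use normality/feasibility of the second coordinate (both of which propagate backwards along a run, as you note) to produce the matching normal witnesses. That half is sound.

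The gap is in your ``only if'' direction, and it sits exactly at the step you describe as ``lift the pair $(s,\omega)$ --- where $\omega$ ranges over the matching normal prefixes --- to an infinite run of $V$.'' The definition of non-$\varphi$-diagnosability gives you, for \emph{each} prefix $t\in\overline{\{s\}}$, \emph{some} normal $\omega_t\in\overline{\mathcal{L}^{\varphi}_{e}(G)}$ with $\Theta_{\Delta}(\omega_t)=\Theta_{\Delta}(t)$, but these witnesses need not be nested: $\omega_{t'}$ for a longer prefix $t'$ need not extend $\omega_t$. The synchronization property of $V$ that you invoke is stated only for \emph{finite} pairs $s_1,s_2\in\mathcal{L}(T)$ with equal outputs, so it hands you one finite run of $V$ per prefix, and these runs do not automatically glue into a single infinite run whose second component stays normal and feasible. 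To close this you would need an explicit compactness step --- e.g.\ apply K\H{o}nig's lemma to the finitely-branching tree of finite $V$-runs whose first component tracks the fixed accepting $T$-run of $s$ and whose second component remains in $Q_{N,T}\cap Q_{feas,T}$ --- and you do not supply it. The paper avoids the issue entirely by staying finite: it picks a single sufficiently long prefix $t$ whose $T$-run revisits one faulty accepting state more than $|\Sigma_e|\cdot|\Sigma_e\cup\{\varepsilon\}|\cdot|Q_T|$ times via non-$\varepsilon$ first-component events, pairs it with a \emph{single} witness $\omega_t$, and applies the pigeonhole principle to the resulting finite run of $V$ to extract the cycle directly. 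Either repair works, but as written your construction of the infinite run is unjustified, and it is precisely the non-coherence of the $\omega_t$ that makes this direction nontrivial.
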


The intuition of the above theorem is as follows. 
Since $\pi$ is a reachable cycle, we know that 
we can find an infinite string  $s=t(\sigma_V^1\cdots \sigma_V^N)^\omega$ in $V$, 
where $t$ is some string entering the cycle.  
Then the two conditions ensure that 
(i) its first component $\theta_1(s)$ is indeed an
infinite extended string in $\mathcal{L}(\tilde{T})$ that is both faulty and satisfying $\varphi$; and 
(ii) its second component $\theta_2(s)$ is an extended string 
such that any of its finite prefix is normal and can be extended to an infinite string satisfying $\varphi$, i.e., $\theta_2(s) \in \overline{\mathcal{L}^{\varphi}_{e}(G)}$.  
Moreover, the construction of $V$ ensures that $\theta_1(s)$ and $\theta_2(s)$ have the same output. 
Therefore, the existence of such a cycle falsifies $\varphi$-diagnosability. 
The detailed proof is provided in the Appendix. Here we illsurate this theorem by the following example. 

\begin{myexm} \label{exm:checking permanent}\upshape
Let us still consider the running example   $G_1$ in Figure~\ref{fig:System G_1} with the same setting in Example \ref{exam:permanent-specification}. 
As we have discussed in Example \ref{exm:varphi-Diagnosability}, this system is not $\varphi_{\textsf{per}}$-diagnosable.  Here, we analyze this more formally by Theorem \ref{thm:verify varphi-diag}. 
Based on its observation constrained system $T_1$, we construct the verification system $V_1$, which is partially  shown in Figure~\ref{fig:Verfication-Per}, where accepting states are marked by double circles, e.g., state $\{(4F,B),(7N,B)\}$ is  in $Q_{m,V}$  as $(4F,B) \in Q_{m,T} \cap Q_{F,T}$ and $(7N,B) \in Q_{feas,T}\cap Q_{N,T}$.
	Here, we only focus on the reachable cycle satisfying conditions in Theorem \ref{thm:verify varphi-diag} and omit other parts without loss of generality for the purpose of verification. 
	Specifically, we consider the cycle as highlighted with red lines in the Figure~\ref{thm:verify varphi-diag}, i.e.,
	\begin{align*}
		\{(4F,B),(7N,B)\} \xrightarrow{(c_4^c,c_7^c)} \{(2F,B),(8N,B)\} \xrightarrow{(a_2^a,a_8^a)}\\
		\{(3F,B),(7N,B)\} \xrightarrow{(b_3^\varepsilon,\varepsilon)} \{(4F,B),(7N,B)\}.
	\end{align*} 
	Note that all states in the cycle are accepting states and there exist  events $(c_4^c,c_7^c)$ and $(a_2^a,a_8^a)$ such that $\theta_1((c_4^c,c_7^c))\neq \varepsilon$ and $\theta_1((a_2^a,a_8^a))\neq \varepsilon$. Thus, system $G_1$ is not $\varphi_{\textsf{per}}$-diagnosable according to Theorem \ref{thm:verify varphi-diag}. 
\end{myexm}

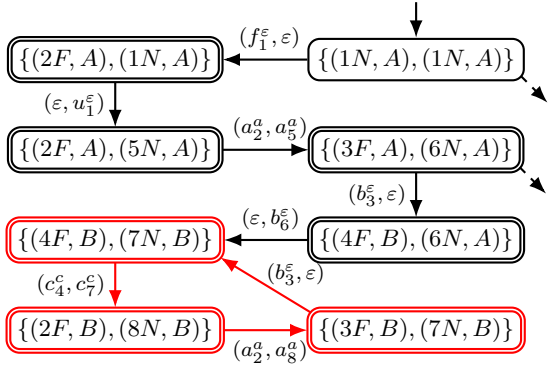
\begin{figure} 
	\centering
	\begin{tikzpicture}[->,>={Latex}, thick, initial text={},node distance=1.2cm, initial where=above, thick, base node/.style={rectangle, align = center, draw, minimum height=0mm, rounded corners =1.5mm, font=\small}, ]		

		\node[initial, state, base node,] (1-2) at (4,5) {$\{(1N,A),(1N,A)\}$}; 
		\node[state,accepting, base node](2-2) [below of=1-2] {$\{(3F,A),(6N,A)\}$};
		\node[state, accepting,base node](3-2) [below of=2-2] {$\{(4F,B),(6N,A)\}$};
		\node[state,accepting, base node,red](4-2) [below of=3-2] {\color{black}$\{(3F,B),(7N,B)\}$};
		
		\node[state,accepting, base node](1-1) at (0,5) {$\{(2F,A),(1N,A)\}$};
		\node[state,accepting, base node](2-1) [below of=1-1] {$\{(2F,A),(5N,A)\}$};
		
		\node[state,accepting, base node,red](3-1) [below of=2-1] {\color{black}$\{(4F,B),(7N,B)\}$};
		\node[state,accepting, base node,red](4-1) [below of=3-1] {\color{black}$\{(2F,B),(8N,B)\}$};
		
		\draw (1-2) -- (1-1) node [pos=0.5, above] {\fontsize{8}{1} $({f}_1^\varepsilon, \varepsilon)$};
		\draw[dashed] ($(1-2.south east)+(-0.05,0.068)$) -- +(-45:0.5cm);
		
		\draw (1-1) -- (2-1) node [pos=0.5, left] {\fontsize{8}{1} $(\varepsilon,u_1^\varepsilon)$};
		\draw (2-1) -- (2-2) node [pos=0.5, above] {\fontsize{8}{1} $(a_2^a,a_5^a)$};
		\draw[dashed] ($(2-2.south east)+(-0.05,0.068)$) -- +(-45:0.5cm);
		
		\draw (2-2) -- (3-2) node [pos=0.5, left] {\fontsize{8}{1} $(b_3^\varepsilon,\varepsilon)$};
		\draw (3-2) -- (3-1) node [pos=0.5, above] {\fontsize{8}{1} $(\varepsilon,b_6^\varepsilon)$};
		\draw[red] (3-1) -- (4-1) node [pos=0.5, left,black] {\fontsize{8}{1} $(c_4^c,c_7^c)$};
		\draw[red] (4-1) -- (4-2) node [pos=0.5, below,black] {\fontsize{8}{1} $(a_2^a,a_8^a)$};
		\draw[red] ($(4-2.north west)+(0.05,-0.068)$) -- ($(3-1.south east)+(-0.05,0.068)$) node [pos=0.7, right,black] {\fontsize{8}{1} $(b_3^\varepsilon,\varepsilon)$};
		
\end{tikzpicture}
	\caption{\label{fig:Verfication-Per}Verification system $V_{1}$.}
\end{figure}
\begin{remark}\upshape
		We conclude this section by  discussing the complexity of checking $\varphi$-diagnosability. 
		First, we note that the augmented system $\tilde{G}$ consists of at most $2|Q|$ states, where $|Q|$ denotes the number of states in system $G$.
		Second, we obtain the observation constrained system $T$ by composing augmented system $\tilde{G}$ with  B\"{u}chi automaton $\mathcal{B}$ that is translated from the LTL formula $\varphi$. 
		Therefore, $T$ has at most $2\cdot|Q|\cdot|X|$ states, 
		where $|X|$ is the number of states in the B\"{u}chi automaton. 
		In general, for any LTL formula $\varphi$, the NBA $\mathcal{B}$ associated with $\varphi$ has at most $ 2^{|\varphi|}\cdot |\varphi|$ number of states, where $|\varphi|$ is the number of operators in formula $\varphi$\cite{baier2008principles}. 
		Then the verification system $V$  has at most $4\cdot|Q|^2 \cdot|X|^2$ states. 
		Finally, checking the existence of the particular cycle in Theorem \ref{thm:verify varphi-diag}  is simply a cycle search problem which can be done in polynomial-time in the size of $V$. 
		Overall, our approach is polynomial in the size of the system model $G$ and exponential in the length of the sensor constraint formula $\varphi$.
\end{remark}

\section{Applications of the Uniform Framework}\label{sec:Application}

Recall that our framework consists of the following four steps:
\begin{enumerate}[1.]
	\item 
	Choose  atomic propositions $\mathcal{AP}$ and labeling function $\textsf{label}: \Sigma_e \to 2^\mathcal{AP}$;
	\item 
	Describe the sensor constraint by  LTL formula $\varphi$;
	\item 
	Construct the observation constrained system $T$ and the verification system $V$;
	\item 
	Check $\varphi$-diagnosability based on $V$ by Theorem \ref{thm:verify varphi-diag}.
\end{enumerate}

In the previous sections, we have shown how to implement Steps 3 and 4. 
However, Steps 1 and 2 are more application-dependent, i.e., one needs to carefully choose $\mathcal{AP}$ and   $\textsf{label}$, 
and write down $\varphi$ based on the specific scenario of the system. 
In this section, we show explicitly  
how the proposed new notion of $\varphi$-diagnosability subsumes existing notions of 
(i) diagnosability under intermittent sensor failures; 
(ii) diagnosability under permanent sensor failures; and
(iii) $K$-loss diagnosability. 
Furthermore, we introduce two new types of diangsability called 
(i) diagnosability under output fairness;  and
(ii) diagnosability under unreliable sensors with minimum dwell-time, using the general notion of $\varphi$-diagnosability.

\subsection{Diagnosability Subject to Sensor Failures}
In Remark~\ref{rem:capture intermittent}, we have discussed how to capture the notion of  robust diagnosability subject to \emph{intermittent sensor failures}  in our framework. 
Furthermore,   we have shown, by the running example, that how to model the notion of robust diagnosability subject to \emph{permanent sensor failures}  in our framework for the special case of a single failure sensor.  
Here we formally present a unified way that supports both intermittent and permanent sensor failures using our framework. 

Let  $G=(Q,\Sigma,\delta,q_0)$ be the system model. 
We assume that  transitions $\mathbb{T}\subseteq Q\times \Sigma$ are partitioned as follows:
\[
\mathbb{T}=\mathbb{T}_{uo}\dot{\cup}\mathbb{T}_{o}\dot{\cup}\mathbb{T}_{\textsf{int}}\dot{\cup}\mathbb{T}_{\textsf{per}},
\]
where 
\begin{itemize}
	\item 
	$\mathbb{T}_{uo}$ are transitions that can never be observed; 
	\item 
	$\mathbb{T}_{o}$ are transitions that can always be observed; 
	\item 
	$\mathbb{T}_{\textsf{int}}$  are transitions subject to intermittent sensor failures;
	\item 
	$\mathbb{T}_{\textsf{per}}$ are transitions subject to permanent sensor failures. 
\end{itemize}
Similar to Equation~\eqref{eq:intermittent observation}, we define an observation mapping $\mathcal{O}:Q\times \Sigma \to 2^{\Sigma\cup \{\varepsilon\}}$ by: 
for any $q \in Q$ and  $\sigma \in \Sigma$, we have
\begin{align} 
	\mathcal{O}(q,\sigma)\!=\!
	\left\{\! 
	\begin{array}{l l}
		\{\varepsilon\} &\text{if } (q,\sigma) \!\in\! \mathbb{T}_{uo} \\
		\{\sigma\} &\text{if } (q,\sigma) \!\in\! \mathbb{T}_{o}\\ 
		\{\sigma,\varepsilon\} &\text{if } (q,\sigma) \!\in\! \mathbb{T}_{\textsf{int}}\cup \mathbb{T}_{\textsf{per}} 
	\end{array} 
	\right..
\end{align}

Note that, for those transitions subject to intermittent sensor failures, there is no need to put additional sensor constraints since they are freely to fail or recover. However, we need to use LTL formula to constrain the behavior of those sensors subject to permanent sensor failures.  
To this end, we choose the following set of  of atomic propositions   
\begin{equation}
\mathcal{AP}_{\textsf{per}}=
\{
m_{i}^{(q,\sigma)}:  i\in \{0,1\}, (q,\sigma)\in \mathbb{T}_{\textsf{per}} 
\},
\end{equation}
where $m_{0}^{(q,\sigma)}$ means that 
the sensor observes event $\sigma$ successfully for transition $(q,\sigma)$, i.e., the corresponding sensor is normal, 
while $m_{1}^{(q,\sigma)}$ means that the sensor misses the observation  for transition $(q,\sigma)$, i.e., the corresponding sensor is faulty. 
Then, the labeling function  
$\textsf{label}_{\textsf{per}}: \Sigma_e \to 2^{\mathcal{AP}_{\textsf{per}}}$ is defined by:
for any $\sigma_e=(q,\sigma,o)\in \Sigma_e$, we have  
\begin{equation}
	\textsf{label}_{\textsf{per}}(\sigma_e)= \left\{
	\begin{array}{cl}
		\{m_0^{(q,\sigma)}\},   & \text{if }  (q,\sigma)\in \mathbb{T}_{\textsf{per}}\wedge   o=\sigma \\
		\{m_1^{(q,\sigma)}\},   & \text{if }  (q,\sigma)\in \mathbb{T}_{\textsf{per}}\wedge  o=\varepsilon\\
		\emptyset,  &  \text{otherwise}   	
	\end{array} \right.
\end{equation}
Then the sensor constraint for the scenario of intermittent/permanent failures can be  written as 
\begin{equation} \label{eq:permanent equation-uniform}
	\varphi_{\textsf{int} \wedge \textsf{per}}=\bigwedge_{(q,\sigma)\in \mathbb{T}_{\textsf{per}}}\Box( m_1^{(q,\sigma)} \to \Box \neg m_0^{(q,\sigma)}).
\end{equation}

\begin{remark}\upshape
	If we only consider the case of intermittent sensor failures, 
	i.e., $\mathbb{T}_{\textsf{per}}=\emptyset$, 
	then the above formulation becomes 
	$\mathcal{AP}_{\textsf{per}}=\emptyset$, 
	$\textsf{label}_{\textsf{per}}(\sigma_e) = \emptyset, \forall \sigma_e\in \Sigma_e$ 
	and $\varphi_{\textsf{int} \wedge \textsf{per}}=\texttt{true}$.  
	Then the B\"{u}chi automaton $\mathcal{B}_{\textsf{int} \wedge \textsf{per}}$ associated with  $\varphi_{\textsf{int} \wedge \textsf{per}}$ only contains a single state with a self-loop labeled with $\texttt{true}$. 
	Therefore, such an NBA $\mathcal{B}_{\textsf{int} \wedge \textsf{per}}$ will not restrict the behavior of $\tilde{G}$ at all, 
	and the verification system $V$ essentially becomes to the standard structures in \cite{carvalho2012robust}. 
\end{remark}

\begin{remark}\upshape
	An approach for the unification of diagnosability of DES subject to both intermittent and permanent sensor failures  has been recently proposed by \cite{takai2021general}. 
	The above presented proposed further generalized the result in \cite{takai2021general} by supporting state-dependent (or transition-based) observations. 
	Furthermore, the result in \cite{takai2021general} is an instance of our general framework, which, as we will further show later, supports much more user-specified scenarios. 
\end{remark}

\subsection{K-loss Robust Diagnosability} \label{subsec:K-loss}
Recently in \cite{oliveira2022k}, the authors propose a new notion of $K$-loss diagnosability in order to capture the scenario of bounded losses in observation channels. 
In this setting\footnote{The results in \cite{oliveira2022k} considers the decentralized setting. Here, we just review its centralized counterpart.}, it is assumed that the event set is partitioned as $\Sigma=\Sigma_o \cup \Sigma_{uo}$. Observable events in $\Sigma_o$ are transmitted from the sensors to the diagnoser via $n\leq |\Sigma_o|$ communication channels. To this end, the set of observable events $\Sigma_o$ is further partitioned as 
\[
\Sigma_o=\Sigma_{o,1} \dot{\cup} \Sigma_{o,2} \dot{\cup} \cdots \dot{\cup}\Sigma_{o,n},
\]
where for each $i\in \{1,\dots,n\}$, 
$\Sigma_{o,i}$ is the set of events whose observations are transmitted via the $i$th communication channel.
Furthermore, it is assumed that for each observation channel $i\in \{1,\dots,n\}$,  
integer  $k_i\in \mathbb{N}$ is the maximum number of \emph{consecutive losses of observations}. 
That is, if event $\sigma\in \Sigma_{o,i}$ occurs $k_i+1$ times consecutively and the diagnoser does not receive its first $k_i$ occurrences due to losses in the observation channel, then its $k_i+1$th occurrence will be received by the diagnoser for sure. 
We denote by  $K=(k_1,\dots,k_n)$ the tuple of all maximum numbers of consecutive loss for all channels. The notion of $K$-loss (co)diagnosability was introduced in \cite{oliveira2022k} as the necessary and sufficient condition for the existence of a diagnoser under such a setting. 
	
Now, we discuss how to formulate $K$-loss  diagnosability  in terms of our general notion of $\varphi$-diagnosability.  For system $G=(Q,\Sigma, f,q_0)$, we define  observation mapping $\mathcal{O}:Q\times \Sigma \to 2^{\Sigma_o\cup \{\varepsilon\}}$ by: 
for any $q \in Q$ and  $\sigma \in \Sigma$, we have
\begin{align} 
	\mathcal{O}(q,\sigma)\!=\!
	\left\{\! 
	\begin{array}{l l}
	\{\varepsilon\} &\text{if } (q,\sigma) \!\in\! \Sigma_{uo} \\ 
	\{\sigma,\varepsilon\} &\text{if } (q,\sigma) \!\in\! \Sigma_{o} 
	\end{array} 
	\right..
\end{align}
Then we choose the set of atomic propositions by 
\begin{equation} \label{eq:AP-K-loss}
	\mathcal{AP}_{\textsf{K-loss}}=
	\{
	m_{i}^{j}:  i\in \{0,1\}, j\in \{1,\dots,n\} 
	\},
\end{equation}
where $m_0^j$ means that an event  $\sigma\in \Sigma_{o,j}$ occurs and it is  transmitted  successfully in the $j$th channel and  $m_1^j$ means that an event transmission in the $j$th channel is lost.  
To capture the above meanings, we define a  labeling function $\textsf{label}_{\textsf{K-loss}}$ by:
for any $\sigma_e=(q,\sigma,o)\in \Sigma_e$, we have  
\begin{equation} \label{eq:label-K-loss}
	\textsf{label}_{\textsf{K-loss}}(\sigma_e)= \left\{
	\begin{array}{cl}
	\{m_0^{j}\},   & \text{if }  \sigma \in \Sigma_{o,j} \wedge   o \neq \varepsilon \\
	\{m_1^{j}\},   & \text{if }  \sigma \in \Sigma_{o,j}\wedge  o=\varepsilon\\
	\emptyset,  &  \text{otherwise}   	
	\end{array} \right.
	\end{equation}
	where $j\in \{1,\dots,n\}$.
	
To formalize the sensor constraint  for $K$-loss diagnosability,  
we need to exclude the scenario, where there are  more than $k_j$  consecutive observation losses for some channel $j\in \{1,\dots, n\}$. To this end, for each $j\in \{1,\dots, n\}$, we define a sequence of LTL formulae $\Phi^j(0),\dots, \Phi^j(k_{j})$ as follows: 
\begin{align}
\left\{
	\begin{array}{l l}
	  \Phi^j(0)= m_1^{j},   \\
	 \Phi^j(k)=m_1^{j} \wedge \bigcirc (\neg m_0^{j} U \Phi^j(k-1)),	
\end{array} \right.
\end{align} 
	where $k\geq 1$. Intuitively,  $\Phi^j(0)=m_1^j$ means that the current observation  in the $j$th channel is lost. 
	Then $\Phi^j(1)=m_1^{j} \wedge \bigcirc (\neg m_0^{j} U m_1^j )$ means that 
	the current observation  in the $j$th channel is lost and 
	the no event $\Sigma_{o,j}$ in the $j$th channel can be observed until the next observation loss in the $j$th channel, which means that 
	the $j$th channel will have two consecutive  losses. 
	Therefore, by induction, $\Phi^j(k)$ means that the observations in the $j$th channel have  $k+1$ consecutive losses. 
	Then, the requirement that ``starting from any instant, the $j$th observation channel cannot have $k_j+1$ consecutive losses" can be captured by the   LTL formula 
	$ \Box \neg \Phi^j(k_j)=\neg \lozenge   \Phi^j(k_j)$. 
	Since we require that all $n$ channels satisfy the $K$-loss assumption, 
	the sensor constraint for this scenario is given by 
\begin{equation}
\varphi_{\textsf{K-loss}}= \bigwedge_{j\in \{1,\dots,n\}} \Box \neg \Phi^j(k_j).
\end{equation}
	
We use following example to illustrate this scenario.

\begin{figure*}[!ht] 
	\centering
	\hspace{-10pt}
	\subfigure[System $G_2$]
	{\label{fig:K-LOSS-G}
		\centering
\usetikzlibrary {patterns.meta}

\begin{tikzpicture}[->,>={Latex}, thick, initial text={}, node distance=1.25cm, initial where=above, thick, base node/.style={circle, draw, minimum size=5mm, font=\small}]		
	\node[initial, state, base node, ] (1) at (1.5,7) {$1$}; 
	\node[state, base node, ](2) at (0.6,6) {$2$};
	\node[state, base node, ](3) [below of=2] {$3$};
	\node[state, base node, ](4) [below of=3] {$4$};
	\node[state, base node, ](5) at (2.4,6) {$5$};
	\node[state, base node, ](6) [below of=5] {$6$};
	\node[state, base node, ](7) [below of=6] {$7$};
	\node[state, base node, ](8) [below of=7] {$8$};
	
	\path[]
	(1) edge node [left, xshift=0.1cm, yshift=0.15cm] {\fontsize{8}{1} $f \backslash\! \{\varepsilon\} $} (2)
	(1) edge node [right, xshift=-0.2cm, yshift=0.15cm] {\fontsize{8}{1} $u \backslash\! \{\varepsilon\}$} (5)
	(2) edge node [right, xshift=-0.1cm] {\fontsize{8}{1} $a\! \backslash\! \{a,\varepsilon\} $} (3)
	(3) edge node [midway,right, xshift=-0.1cm] {\fontsize{8}{1} $b\! \backslash\! \{b,\varepsilon\} $} (4);
	
	\draw (4.west) -- ($(4.west)+(-0.9,0)$)
	($(4.west)+(-0.9,0)$) -- node [right, xshift=-0.2cm] {\fontsize{8}{1} $c\! \backslash\! \{c,\varepsilon\} $} 
	($(2.west)+(-0.9,0)$)
	-- (2.west);
	
	\path[]
	(5) edge node [midway,right, xshift=-0.1cm] {\fontsize{8}{1} $a  \backslash\! \{a,\varepsilon\} $} (6)
	(6) edge node [midway,right, xshift=-0.1cm] {\fontsize{8}{1} $b  \backslash\! \{b,\varepsilon\}$} (7)
	(7) edge node [right, xshift=-0.1cm ] {\fontsize{8}{1} $ c  \backslash\! \{c,\varepsilon\} $} (8);
	
	\draw
	(8.west) -- ($(8.west)+(-0.75,0)$)
	($(8.west)+(-0.75,0)$) -- node [midway,right, xshift=-0.2cm] {\fontsize{8}{1} $a\! \backslash\! \{a,\varepsilon\} $} ($(7.west)+(-0.75,0)$)
	-- (7.west);	
\end{tikzpicture}
	}
\hspace{-10pt}
	\subfigure[B\"{u}chi automaton $\mathcal{B}_{\textsf{K-loss}}$]
	{\label{fig:K-LOSS-B}
		\centering
		\begin{tikzpicture}[->,>=stealth,shorten >=0pt,auto,node distance=2.3cm,thick,base node/.style={circle,draw,minimum size=0.5mm}, font=\small]
	\node[initial,initial text={}, state, accepting,base node] (A) {$A$};
	\node[state, accepting,base node](B)[right of=A ]{$B$};
	
	\path[]
	(A) edge [loop above] node {\fontsize{8}{1}$\neg m_1^1\! \wedge\! \neg m_1^2$} (A)
	(A) edge[bend left] node [below, xshift=0, yshift=-0.1cm] {\fontsize{8}{1}$\neg m_1^1\! \wedge\! m_1^2$} (B)
	(B) edge [loop above] node [above] {\fontsize{8}{1}$\neg m_1^1\! \wedge\! \neg m_1^2\! \wedge\! \neg m_0^2$} (B)
	(B) edge [bend left] node [below] {\fontsize{8}{1}$\neg m_1^1\! \wedge\! \neg m_1^2\! \wedge m_0^2$} (A);
\end{tikzpicture}
	}
\hspace{-10pt}
%
%
	\subfigure[Verification system $V_2$]
	{\label{fig:K-LOSS-V}
		\centering
		\begin{tikzpicture}[->,>={Latex}, thick, initial text={},node distance=1.2cm, initial where=above, thick, base node/.style={rectangle, align = center, draw, minimum height=0mm, rounded corners =1.5mm, font=\small}, ]		
	
	\node[initial, state, base node] (1-2) at (3.9,5) {$\{(1N,A),(1N,A)\}$}; 
	\node[state, accepting,base node](2-2) [below of=1-2] {$\{(3F, A),(6N,A)\}$};
	\node[state,accepting, base node](3-2) [below of=2-2] {$\{(4F,A),(7N,A)\}$};
	\node[state,accepting, base node](4-2) [below of=3-2] {\color{black}$\{(4F,B),(7N,A)\}$};
	\node[state,accepting, base node](5-2) [below of=4-2] {\color{black}$\{(2F,B),(8N,A)\}$};
	
	\node[state,accepting, base node](1-1) at (0,5) {$\{(2F,A),(1N,A)\}$};
	\node[state,accepting, base node](2-1) [below of=1-1] {$\{(2F,A),(5N,A)\}$};
	
	\node[state,accepting, base node](3-1) [below of=2-1] {\color{black}$\{(2F,A),(8N,A)\}$};
	\node[state,accepting, base node](4-1) [below of=3-1] {\color{black}$\{(3F,A),(7N,A)\}$};
	\node[state,accepting, base node,red](5-1) [below of=4-1] {\color{black}$\{(3F,B),(7N,A)\}$};

	\draw (1-2) -- (1-1) node [pos=0.5, above] {\fontsize{8}{1} $({f}_1^\varepsilon, \varepsilon)$};
	\draw[dashed] ($(1-2.south east)+(-0.05,0.068)$) -- +(-45:0.5cm);
	
	\draw (1-1) -- (2-1) node [pos=0.5, left] {\fontsize{8}{1} $(\varepsilon,u_1^\varepsilon)$};
	\draw (2-1) -- (2-2) node [pos=0.5, above] {\fontsize{8}{1} $(a_2^a,a_5^a)$};
	\draw[dashed] ($(2-2.south east)+(-0.05,0.068)$) -- +(-45:0.5cm);
	
	\draw (2-2) -- (3-2) node [pos=0.5, left] {\fontsize{8}{1} $(b_3^b,b_6^b)$};
	\draw (3-2) -- (3-1) node [pos=0.5, above] {\fontsize{8}{1} $(c_4^c,c_7^c)$};
	\draw (3-1) -- (4-1) node [pos=0.5, left,black] {\fontsize{8}{1} $(a_2^a,a_8^a)$};
	\draw (4-1) -- (4-2) node [pos=0.5, above,black] {\fontsize{8}{1} $(b_3^\varepsilon,\varepsilon)$};
	\draw (4-2) -- (5-2) node [pos=0.5, left,black] {\fontsize{8}{1} $(c_4^c,c_7^c)$};
	\draw (5-2) -- (5-1) node [pos=0.5, above,black] {\fontsize{8}{1} $(a_2^a,a_8^a)$};
	
\end{tikzpicture}
	}
\hspace{-10pt}
	\caption{An example of K-loss robust diagnosability. For system $G_2$: $\Sigma_{o}=\Sigma_{o,1}\cup \Sigma_{o,2}, \Sigma_{o,1}=\{a,c\}$ and $ \Sigma_{o,2}=\{b\}$.}
	\end{figure*}
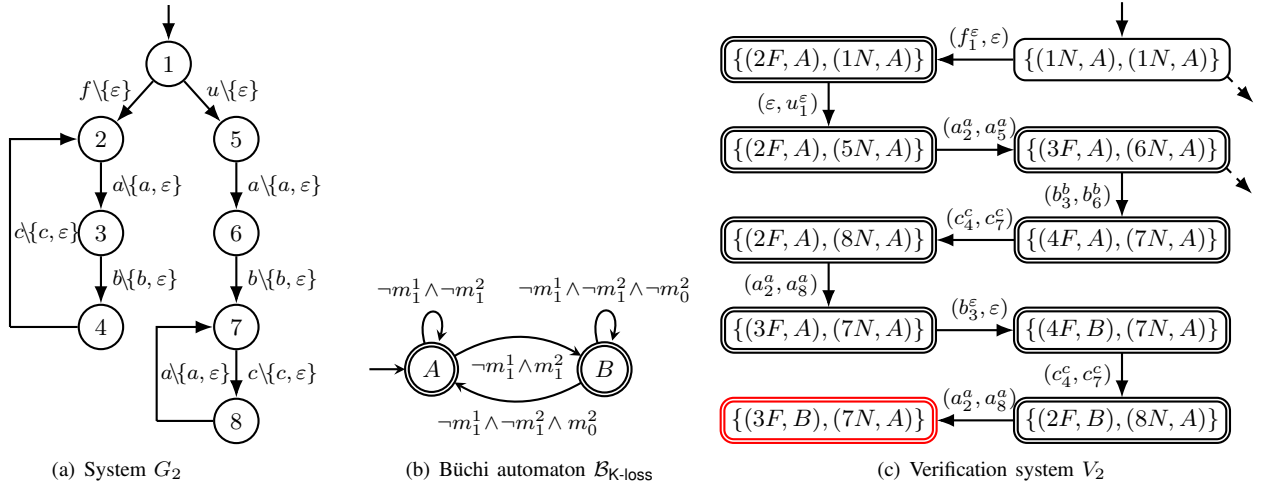
				
\begin{myexm} \label{exm:K-loss example}\upshape
	We consider the system $G_2$, shown in Figure~\ref{fig:K-LOSS-G}, where the observable events $\Sigma_o=\{a,b,c\}$ are partitioned into two observation channels with $\Sigma_{o,1}=\{a,c\}$ and $\Sigma_{o,2}=\{b\}$. 
	We assume that $k_1=0$ and $k_2=1$, i.e., the first channel for  $\Sigma_{o,1}$ is reliable without observation loss and the second channel for $\Sigma_{o,2}$ can only have at most one consecutive observation loss. 
	Then the set atomic propositions is $\mathcal{AP}_{\textsf{K-loss}}=\{m_0^1,m_1^1, m_0^2, m_1^2\}$, the labeling function is specified in Equation~\eqref{eq:label-K-loss} and the sensor constraint is 
	\[
		\varphi_{\textsf{K-loss}}= \Box \neg m_1^1	\wedge \Box \neg (m_1^{2} \wedge \bigcirc (\neg m_0^{2} U m_1^{2} )).
	\]
	The B\"{u}chi automaton $\mathcal{B}_{\textsf{K-loss}}$ associated with $\varphi_{\textsf{K-loss}}$ is shown in Figure~\ref{fig:K-LOSS-B}. 
	We construct the verification system $V_2$, which is partially depicted in Figure~\ref{fig:K-LOSS-V}. 
Due to the sensor constraint $\varphi_{\textsf{K-loss}}$, if the event $b^{\varepsilon}_3$ or $b^{\varepsilon}_6$ occurs in $G_2$, the next occurrence of $b^{\varepsilon}_3$ or $b^{\varepsilon}_6$ is possible only after $b^{b}_3$ or $b^{b}_6$ occurs. 
	For example, from state $3$, if the event $b^{\varepsilon}_3$ occurs, we can only obtain the extended string $b^{\varepsilon}_3 c_4^c a_2^a b_3^b$ rather than $b^{\varepsilon}_3 c_4^c a_2^a b_3^\varepsilon$ since the latter corresponds to the case of two consecutive losses in channel $\Sigma_{o,2}$.
	Furthermore, for the first communication channel, since its maximum failure bound is $k_1=0$, that is, it is reliable, the extended events such as $a_2^\varepsilon$ and $c_4^\varepsilon$ will not be feasible. 
	In Figure~\ref{fig:K-LOSS-V}, we see that the system is blocked at state $\{(3F,B),(7N,A)\}$. This is because that the first component $(3F,B)$ means that the second communication channel has lost an observation and, therefore,  has to transit next observation successfully. 
	However, only event $b_3^b$ can occur from state $(3F,B)$, while only event $c_7^c$ can occur at state $(7N,A)$.
	With the similar reason, one can also complete the remaining part of $V_2$
	in which no reachable cycle  satisfying the conditions in Theorem \ref{thm:verify varphi-diag} can be found. 
	Therefore,  system $G_2$ is $\varphi_{\textsf{K-loss}}$-diagnosable.  
\end{myexm}

\subsection{Unreliable Sensors with Minimum Dwell-Time}
We have shown that both intermittent sensor failures and permanent sensor failures can be captured in our unified framework. Note that, in the existing framework of intermittent sensor failures, those unreliable sensors can fail and recover \emph{arbitrarily} at any instant.  
Here, we show how our LTL-based approach can describe a more general scenario, 
where sensors may switch between failure mode and normal mode but with \emph{minimum dwell-time} for each mode. This is motivated by the practical scenario that
whenever a normal sensor breaks down, it will take some time to recover, 
and whenever a sensor failure is fixed, it can ensure to  work normally for some periods.   

Still, we consider system $G=(Q,\Sigma,\delta,q_0)$ and  assume that the event set is partitioned as $\Sigma=\Sigma_o \dot{\cup} \Sigma_{uo}\dot{\cup} \Sigma_{ur}$, 
where $\Sigma_{ur}$ is the set of events whose sensors are unreliable. 
We  denote by $k_N\in \mathbb{N}$ the minimum dwell-time for the normal mode 
(or simply, \emph{normal dwell-time}), 
which means that whenever  a sensor recovers from failure to normal, it will remain normal at least for the next $k_N$ uses of the sensor.  
Similarly, we  denote by $k_F\in \mathbb{N}$ the minimum dwell-time for the failure mode 
(or simply, \emph{failure dwell-time}), 
which means that whenever  a sensor breaks down from normal to failure, it will remain failure at least for the next $k_F$ uses of the sensor. 
%

To formalized the above setting in our framework, we define an observation mapping $\mathcal{O}:Q\times \Sigma \to 2^{\Sigma_o\cup \Sigma_{ur}\cup \{\varepsilon\}}$ by: 
for any $q \in Q$ and  $\sigma \in \Sigma$, we have
\begin{align} 
	\mathcal{O}(q,\sigma)\!=\!
	\left\{\! 
	\begin{array}{l l}
		\{\varepsilon\} &\text{if }  \sigma \!\in\! \Sigma_{uo} \\
		\{\sigma\} &\text{if } \sigma \!\in\! \Sigma_{o}\\ 
		\{\sigma,\varepsilon\} &\text{if } \sigma \!\in\! \Sigma_{ur}
	\end{array} 
	\right..
\end{align}
The set of atomic propositions is chosen as  
\begin{equation}
	\mathcal{AP}_{\textsf{dwell}}=\{ m_i^{\sigma}: i=0,1, \sigma\in \Sigma_{ur} \},   
\end{equation}
where $m_0^\sigma$ means that event $\sigma\in \Sigma_{ur}$ occurs and its sensor is normal, while $m_1^\sigma$ means that event $\sigma\in \Sigma_{ur}$ occurs and its sensor is failure. The  labeling function $\textsf{label}_{\textsf{dwell}}$ is defined by:
for any $\sigma_e=(q,\sigma,o)\in \Sigma_e$, we have  
\begin{equation} \label{eq:label-dwell}
	\textsf{label}_{\textsf{dwell}}(\sigma_e)= \left\{
	\begin{array}{cl}
		\{m_0^{\sigma}\},   & \text{if }   \sigma \in \Sigma_{ur}\wedge   o \neq \varepsilon \\
		\{m_1^{\sigma}\},   & \text{if }  \sigma \in \Sigma_{ur}\wedge  o=\varepsilon\\
		\emptyset,  &  \text{otherwise}   	
	\end{array} \right..
\end{equation}

To capture the sensor constraint of normal dwell-time, 
we first need to exclude the scenarios where a sensor switches from the failure mode to the normal mode and  remains normal for less than $k_N$ uses before the next switch back to the failure mode. 
To this end, we define a sequence of LTL formulae inductively as follows:
\begin{align}
\left\{
	\begin{array}{l l}
	  \Phi^{\sigma}_{\textsf{nor}}(1)= m_0^{\sigma} \wedge \bigcirc (\neg m_0^{\sigma} U m_1^{\sigma})   \\
	\Phi^{\sigma}_{\textsf{nor}}(k)= m_0^{\sigma} \wedge \bigcirc ((\neg m_0^{\sigma} \wedge \neg m_1^{\sigma}) U \Phi^{\sigma}_{\textsf{nor}}(k-1) )
\end{array} \right.\!\!\!.
\end{align} 
Intuitively, 
$\Phi^{\sigma}_{\textsf{nor}}(1)$ means that,  $\sigma$ is observed currently with a normal sensor,  and there is  no occurrence  of  $\sigma$ until its sensor fails. 
Therefore, this formula captures the scenario where the sensor of event $\sigma$ remains normal for one time of use before switching to the failure mode. 
By induction, $\Phi^{\sigma}_{\textsf{nor}}(k)$ means that the sensor of event $\sigma$ has  remained normal for  $k$ times of use before switching to the failure mode. 

Then, since the dwell-time of normal mode is counted starting from the transition from failure mode to normal mode,   
we can describe the scenario where sensor switches from failure mode to normal mode and remains at normal mode for $k$ uses before switching back to failure mode by the following formula: 
\begin{equation} \label{eq:dwell-pro-n}
	\Phi^{\sigma}_{\textsf{nor},\textsf{dwell}}(k)=m_1^\sigma \wedge \bigcirc (\neg m_0^\sigma U \Phi^{\sigma}_{\textsf{nor}}(k) ).
\end{equation}
However, the minimum normal dwell-time $k_N$ requires that  the above formula should not be satisfied for any $k\leq k_N-1$; otherwise, it means that the sensor fails again when it switches back to normal for less than $k_N$ times of uses. 
Therefore, the overall, normal dwell-time constraint is given by 
\begin{equation}
\Phi_{\textsf{nor},\textsf{dwell}}  =\bigwedge_{\sigma \in \Sigma_{ur}} \bigwedge_{k=1,\dots, k_N-1}
  \Box \neg \Phi^{\sigma}_{\textsf{nor},\textsf{dwell}}(k). 
\end{equation}

Similar to the case of normal dwell-time, we can define the sensor constraint for failure dwell-time by 
\begin{equation}
\Phi_{\textsf{fail},\textsf{dwell}}  =\bigwedge_{\sigma \in \Sigma_{ur}} \bigwedge_{k=1,\dots, k_F-1}
  \Box \neg \Phi^{\sigma}_{\textsf{fail},\textsf{dwell}}(k),  
\end{equation}
where we have 
\begin{align}
\left\{
	\begin{array}{l l}
	\Phi^{\sigma}_{\textsf{fail},\textsf{dwell}}(k)=m_0^\sigma \wedge \bigcirc (\neg m_1^\sigma U \Phi^{\sigma}_{\textsf{fail}}(k))
	\\
\Phi^{\sigma}_{\textsf{fail}}(1)=m_1^{\sigma} \wedge \bigcirc (\neg m_1^{\sigma} U m_0^{\sigma})   
\\
	\Phi^{\sigma}_{\textsf{fail}}(k)=m_1^{\sigma} \wedge \bigcirc ((\neg m_1^{\sigma} \wedge \neg m_0^{\sigma}) U \Phi^{\sigma}_{\textsf{fail}}(k-1)
\end{array} \right.\!\!\!.
\end{align} 

Therefore, the overall sensor constraint for both  normal and failure dwell-times is given by  
\begin{equation}\label{eq:dwell-pro}
	\varphi_{\textsf{dwell}}=\Phi_{\textsf{nor},\textsf{dwell}}\wedge \Phi_{\textsf{fail},\textsf{dwell}}.
\end{equation}

\subsection{Diagnosability with Output Fairness}\label{subsec:fair}
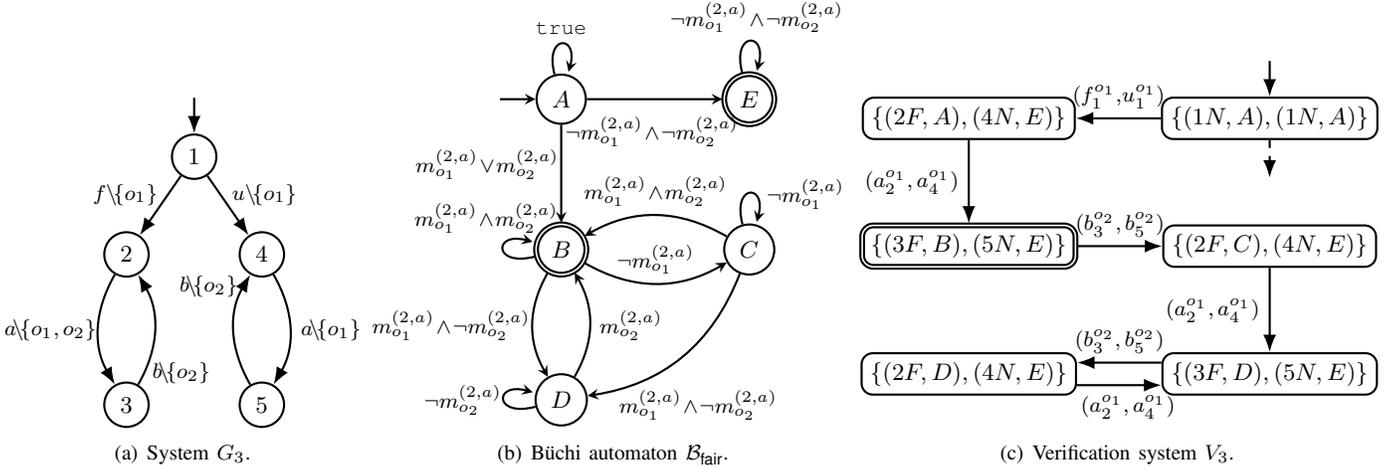
\begin{figure*}[!ht] 
	\centering
	\hspace{-20pt}
	\subfigure[System $G_3$.]
	{\label{fig:G-fair}
		\centering
		\begin{tikzpicture}[->,>={Latex}, thick, initial text={}, node distance=2cm, initial where=above, thick, base node/.style={circle, draw, minimum size=5mm, font=\small}]		
	\node[initial, state, base node, ] (1) at (1.5,6) {$1$}; 
	\node[state, base node, ](2) at (0.6,4.7) {$2$};
	\node[state, base node, ](3) [below of=2] {$3$};
	\node[state, base node, ](4) at (2.4,4.7) {$4$};
	\node[state, base node, ](5) [below of=4] {$5$};
	
	\path[]
	(1) edge node [left, xshift=0.1cm, yshift=0.15cm] {\fontsize{8}{1} $f \backslash\! \{o_1\} $} (2)
	(1) edge node [right, xshift=-0.2cm, yshift=0.15cm] {\fontsize{8}{1} $u \backslash\! \{o_1\}$} (4)
	(2) edge[bend right] node [left, xshift=0.1cm] {\fontsize{8}{1} $a\! \backslash\! \{o_1,o_2\} $} (3)
	(3) edge[bend right] node [below, xshift=0.3cm, yshift=-0.3cm] {\fontsize{8}{1} $b\! \backslash\! \{o_2\} $} (2)
	(4) edge[bend left] node [right, xshift=-0.1cm] {\fontsize{8}{1} $a\! \backslash\! \{o_1\} $} (5)
	(5) edge[bend left] node [above, xshift=-0.4cm, yshift=0.3cm] {\fontsize{8}{1} $b\! \backslash\! \{o_2\} $} (4);	
\end{tikzpicture}
	}
	\hspace{-20pt}
	\subfigure[B\"{u}chi automaton $\mathcal{B}_{\textsf{fair}}$.]
	{\label{fig:NBA-fair}
		\centering
		\begin{tikzpicture}[->,>={stealth},shorten >=0pt,auto,node distance=2.5cm,
	thick,base node/.style={circle,draw,minimum size=0.5mm}, font=\small]
	\node[initial,initial text={}, state,base node] (A) at (0,5) {$A$};
	
	\node[state, accepting,base node](B) at (0,3){$B$};
	\node[state, base node](D)at (0,1){$D$};
	\node[state, base node](C)[right of=B ]{$C$};
	\node[state, accepting,base node](E)[right of=A ]{$E$};
	
	\path[]
	(A) edge [loop above] node {\fontsize{8}{1}$\texttt{true}$} (A)
	(A) edge node [pos=0.4,left, xshift=0.1cm] {\fontsize{8}{1} $m_{o_1}^{(2,a)}\! \vee\! m_{o_2}^{(2,a)}$} (B)
	(A) edge node [pos=0.45,below, yshift=-0.1cm] {\fontsize{8}{1} $\neg m_{o_1}^{(2,a)}\! \wedge\! \neg m_{o_2}^{(2,a)}$} (E)
	(B) edge [loop left] node[above,xshift=-0.23cm,yshift=0.1cm]{\fontsize{8}{1}$m_{o_1}^{(2,a)}\! \wedge\! m_{o_2}^{(2,a)}$} (B)
	(D) edge[bend right] node [right] {\fontsize{8}{1}$m_{o_2}^{(2,a)}$} (B)
	(B) edge[bend right] node [left, xshift=0.1cm] {\fontsize{8}{1}$m_{o_1}^{(2,a)}\! \wedge\! \neg m_{o_2}^{(2,a)}$} (D)
	(B) edge[bend right] node [pos=0.5, above] {\fontsize{8}{1}$\neg m_{o_1}^{(2,a)}$} (C)
	(C) edge[bend right] node [pos=0.5, above] {\fontsize{8}{1}$m_{o_1}^{(2,a)}\! \wedge\! m_{o_2}^{(2,a)}$} (B)
	(C) edge[bend left] node [pos=0.5, below, yshift=-0.3cm, xshift=0.22cm] {\fontsize{8}{1}$m_{o_1}^{(2,a)}\! \wedge\! \neg m_{o_2}^{(2,a)}$} (D)
	(C) edge [loop above] node[right,xshift=0.1cm]{\fontsize{8}{1}$\neg m_{o_1}^{(2,a)}$} (C)
	(D) edge [loop left] node[left,xshift=0.1cm]{\fontsize{8}{1}$\neg m_{o_2}^{(2,a)}$} (D)
	(E) edge [loop above] node[above,xshift=0.1cm]{\fontsize{8}{1}$\neg m_{o_1}^{(2,a)}\! \wedge\! \neg m_{o_2}^{(2,a)}$} (E);
\end{tikzpicture}
	}
	\hspace{-20pt}
	\subfigure[Verification system $V_3$.]
	{\label{fig:V-fair}
		\centering
		\begin{tikzpicture}[->,>={Latex}, thick, initial text={},node distance=1.7cm, initial where=above, thick, base node/.style={rectangle, align = center, draw, minimum height=0mm, rounded corners =1.5mm, font=\small}, ]		
	
	\node[initial, state, base node] (1-2) at (4,4) {$\{(1N,A),(1N,A)\}$}; 
	\node[state, base node](2-2) [below of=1-2] {$\{(2F, C),(4N,E)\}$};
	\node[state,base node](3-2) [below of=2-2] {$\{(3F,D),(5N,E)\}$};
	
	\node[state,base node](1-1) at (0,4) {$\{(2F,A),(4N,E)\}$};
	\node[state,accepting, base node](2-1) [below of=1-1] {$\{(3F,B),(5N,E)\}$};
	
	\node[state, base node](3-1) [below of=2-1] {\color{black}$\{(2F,D),(4N,E)\}$};

	\draw (1-2) -- (1-1) node [pos=0.55, above] {\fontsize{8}{1} $({f}_1^{o_1},\! {u}_1^{o_1})$};
	\draw[dashed] ($(1-2.south)$) -- +(-90:0.5cm);
	\draw (1-1) -- (2-1) node [pos=0.5, left] {\fontsize{8}{1} $(a_2^{o_1},a_4^{o_1})$};
	\draw (2-1) -- (2-2) node [pos=0.45, above] {\fontsize{8}{1} $(b_3^{o_2},b_5^{o_2})$};
	\draw (2-2) -- (3-2) node [pos=0.5, left] {\fontsize{8}{1} $(a_2^{o_1},a_4^{o_1})$};
	%
	%
	\path[]
	($(3-2.west)+(0,0.15)$) edge node [above, pos=0.55] {\fontsize{8}{1} $ (b_3^{o_2},b_5^{o_2})$} ($(3-1.east)+(0,0.15)$)
	($(3-1.east)-(0,0.15)$) edge node [below, pos=0.55] {\fontsize{8}{1} $ (a_2^{o_1},a_4^{o_1})$} ($(3-2.west)-(0,0.15)$);	
\end{tikzpicture}
	}
	\hspace{-20pt}
	\caption{An example of diagnosability with output fairness. For system $G_3$: $\Sigma=\{a,b,f,u\}$ and $ \Delta=\{o_1,o_2\}$.  }
\end{figure*}

In the previous three subsections, we have $\Delta\subseteq \Sigma$ in the observation mapping  due to the specific  meanings of sensor failures or observation losses. 
Here, we further consider the general case where observation symbols $\Delta$
and internal events $\Sigma$ are different. 

As a motivating example, let  us consider  system $G_3$  shown in Figure~\ref{fig:G-fair} 
with  $\Sigma=\{a,b, f,u\}, \Sigma_F=\{f\}$ and $\Delta=\{o_1,o_2\}$. 
The output function $\mathcal{O}:Q\times \Sigma \to 2^{\Delta\cup \{\varepsilon\}}$ is specified in Figure~\ref{fig:G-fair}. 
Note that, without any sensor constraint, the system is not diagnosable according to Definition~\ref{def:classical diag} as defined in   \cite{takai2012verification}.  
This is because one can observe output sequence $o_1(o_1o_2)^n$ for an arbitrary large $n$ for which we cannot determine whether the underlying internal string is faulty or normal.

However, this above issue of non-diagnosability is not very practical since it is \emph{unfair} for output $o_2$ at transition $(2,a)$. Specifically, the reason why we may observe output sequence $o_1(o_1o_2)^\omega$ is that, after the occurrence of fault event $f$, every time the system fires transition $(2,a)$, output symbol $o_2$ losses its opportunity to be observed. 
However, if we assume that the occurrences of $o_1$ and $o_2$ are purely random with some (unknown but non-zero) probabilities, then $o_2$ will be observed with probability one at the system goes on and upon its occurrence, we can detect the fault. 

Instead of going to a probability framework to resolve the above issue \cite{thorsley2008diagnosability,athanasopoulou2010maximum}, 
here we propose to use \emph{fairness condition}, which is a widely used   assumption   in reactive systems, to rule out those  unrealistic infinite behaviors of  the sensors. 
Formally, still let  $G=(Q,\Sigma,\delta,q_0)$ be the system model with an arbitrary output function $\mathcal{O}:Q\times \Sigma \to 2^{\Delta \cup \{\varepsilon\}}$.  
We assume that 
\[
\mathbb{T}_{\textsf{fair}}\subseteq Q\times \Sigma
\]
is the set of transitions for which all of its outputs are \emph{fair} in the sense that 
if  transition $(q, \sigma)\in \mathbb{T}_{\textsf{fair}}$ is fired for infinite number of times, then all of its  observations $o\in \mathcal{O}(q,\sigma)$ can be observed for infinite number of times.   
To formalize this requirement, we choose the set of atomic propositions as follows:
\begin{equation}
\mathcal{AP}_{\textsf{fair}}=
\{
m_{o}^{(q,\sigma)}:  (q,\sigma)\in \mathbb{T}_{\textsf{fair}}, o \in \mathcal{O}(q,\sigma) 
\},
\end{equation}
where $m_{o}^{(q,\sigma)}$ means   that the output of transition $(q,\sigma)\in \mathbb{T}_{\textsf{fair}}$ is $o \in \mathcal{O}(q,\sigma)$.  
For brevity, for each transition 
$(q, \sigma)\in \mathbb{T}_{\textsf{fair}}$, 
we denote by $m^{(q,\sigma)}$ the proposition that transition $(q,\sigma)$ occurs, i.e., 
$m^{(q,\sigma)}=\bigvee_{o \in \mathcal{O}(q,\sigma)} m_o^{(q,\sigma)}$

Based on $\mathcal{AP}_{\textsf{fair}}$, we define the labeling function by: for any $\sigma_e=(q,\sigma,o)\in \Sigma_e$, we have  
\begin{equation}
	\textsf{label}_{\textsf{fair}}(\sigma_e)= \left\{
	\begin{array}{cl}
		\{m_o^{(q,\sigma)}\},   & \text{if }  (q,\sigma)\in \mathbb{T}_{\textsf{fair}} \\
		\emptyset,  &  \text{otherwise}   	
	\end{array} \right..
\end{equation}
Then the sensor constraint for the scenario of fairness assumption can be written as 
\begin{equation} \label{eq:fairness}
	\varphi_{\textsf{fair}}\!:=\!
	\bigwedge_{(q,\sigma) \in \mathbb{T}_{\textsf{fair}}} \!
	\left(\! \Box \lozenge   m^{(q,\sigma)}  \to\! \bigwedge_{o \in \mathcal{O}(\sigma,q)}  \Box \lozenge m_o^{(q,\sigma)}\! \right).
\end{equation}

We use following example to illustrate this scenario.
\begin{myexm}\upshape
Still we  consider  system $G_3$, which is shown to be non-diagnosable according to Definition~\ref{def:classical diag}. 
Now we show that the system is actually $\varphi$-diagnosable under the fairness assumption for outputs.
Specifically,  we define $\mathbb{T}_{\textsf{fair}}=\{(2,a)\}$ as the set of fair transition. 
Then the set of atomic propositions is  $\mathcal{AP}_{\textsf{fair}}=\{m_{o_1}^{(2,a)},m_{o_2}^{(2,a)}\}$.
Then the sensor constraint for output fairness can be written as 
	\begin{align*}
		&\varphi_{\textsf{fair}}=\\
		&\left( \Box \lozenge (m_{o_1}^{(2,a)} \vee m_{o_2}^{(2,a)}) \right) \to  \left((\Box \lozenge m_o^{(q,\sigma)}) \wedge (\Box \lozenge m_o^{(q,\sigma)}) \right),
	\end{align*}
which means that if  transition $(2,a)$ occurs infinitely, then we will observe each possible output in $\mathcal{O}(2,a)=\{o_1,o_2\}$ for infinite number of times. 
The B\"{u}chi automaton $\mathcal{B}_{\textsf{fair}}$ associated with LTL formula $\varphi_{\textsf{fair}}$ is shown in Figure~\ref{fig:NBA-fair}.  
It is worth noticing that, in contrast to B\"{u}chi automata in the previous examples, where all states are accepting, here only some states in $\mathcal{B}_{\textsf{fair}}$ are accepting. 
This is because that  fairness condition  $\varphi_{\textsf{fair}}$ is an \emph{liveness} type of property, while all previous conditions $	\varphi_{\textsf{int} \wedge \textsf{per}}, \varphi_{\textsf{K-loss}}$ and $\varphi_{\textsf{dwell}}$ belong to the category of \emph{safety} properties. 
To verify $\varphi_{\textsf{fair}}$-diagnosability, we construct the verification system $V_3$ of system $G_3$, which is partially depicted in Figure~\ref{fig:NBA-fair}. 
Note that, although there is a cycle$\{(3F,D),(5N,E)\}\xrightarrow{(b_3^{o_2},b_5^{o_2})} \{(2F,D),(4N,E)\}\xrightarrow{(a_2^{o_1},a_4^{o_1})} \{(3F,D),(5N,E)\}$ in $V_3$, it does not satisfy the conditions in Theorem \ref{thm:verify varphi-diag} since the states in the cycle are not accepting. Essentially, this cycle corresponds to the unfair extended string making the system not diagnosable without fairness assumption. 
One can also check that, in the remaining part of  $V_3$, there is no cycle  satisfying the conditions in Theorem \ref{thm:verify varphi-diag}. That is, system $G_3$ is $\varphi_{\textsf{fair}}$-diagnosable.
\end{myexm}

\begin{remark}\upshape
We note that the verification of diagnosability of fair DES has been invesitigated by 
\cite{biswas2010fairness,germanos2015diagnosability, bittner2022diagnosability}.
However, the notion of fairness is different from our setting here. 
Specifically, existing works assume that the dynamic of the system is fair in the sense that 
each transition can be executed for infinite number of times whenever it is enabled infinitely.
However, the observation mappings considered therein are still statically modeled as a natural projection.	In contrast, we impose fairness constraint on the observation mapping rather than the internal behavior of the system  the system's dynamic.   
Furthermore, diagnosability with output fairness has been studied in  our preliminary work \cite{dong2022fault}. However, the verification algorithm in \cite{dong2022fault} is customized only to the case of output fairness. Here we address this problem within our general framework using the unified verification procedure.  
\end{remark}
				
\subsection{Diagnosability   with Mixed Sensor Constraints}\label{subsec:mixed consitraint}
In the previous subsections, we have shown explicitly how different types of sensor constraints can be formulated within our general framework. The user is expected to write down more different sensor constraints for different applications. 
Here, we would like to further point out that our framework is flexible to support  DES where different sensors belong to different types of constraints. 

For example, one can  classify the sensors into $n$ different types and select appropriate atomic proposition set $\mathcal{AP}_i$, labeling function $\textsf{label}_{i}$ and  sensor constraint  $\varphi_i$ for each type of sensors. Then the overall sensor constraint is simply given by $\varphi_{\textsf{mix}}= \bigwedge_{i=1}^{n} \varphi_i$. 
As long as the behaviors of the sensors are not coupled with each other, $\varphi_{\textsf{mix}}$ is always well-defined without conflicting among each sub-formula $\varphi_i$.

\section{Conclusion}\label{sec:Conclusion}
In this paper,  we developed a uniform framework for diagnosability analysis of DES subject to unreliable sensors. To this end, we used LTL formulae as a general and user-friendly tool to model unreliable sensors without restricting to specific sensor types.  We proposed a new notion of $\varphi$-diagnosability as well as its effective verification procedure.  Our approach leverages the automated transformation from LTL formulae to B\"{u}chi automata, which avoids to hand-code the sensor automata case-by-case.  
Our framework not only unifies different  existing  notions of robust diagnosability of DES subject to unreliable sensors, but also supports new scenarios of unreliable sensors for which the diagnosability verification problems have never been considered. In the future, we would like to extend our framework to the decentralized setting with multiple local diagnosers. Also, we would like to further investigate the active diagnosis problem for DES subject to unreliable sensors within our uniform framework.

\bibliographystyle{plain}
	\bibliography{des}
	
\appendix
\emph{Proof of Theorem~\ref{thm:varphi-existence}}
\begin{proof}
	($\Rightarrow$) 
	Suppose that there exists a diagnoser $D$ satisfying conditions C1 and C2, while, for the sake of  contradiction that system $G$ is not $\varphi$-diagnosable. 
	This means that there exists an infinite extended faulty string 
	$s \in \mathcal{L}_{e,F}^{\varphi}(G)$ 
	such that for any prefix $t\in \overline{\{s\}}$, there exists an extended normal string 
	$w \in \overline{\mathcal{L}^{\varphi}_{e}(G)}$ having the same observation with $t$. 
	Since diagnoser $D$ satisfies condition C2, 
	for any $t \in \overline{\{s\}}$, we have $D(\Theta_{\Delta}(t))=0$; otherwise, if $D(\Theta_{\Delta}(t))=1$, we have $D(\Theta_{\Delta}(\omega))=D(\Theta_{\Delta}(t))=1$, which violates condition C2.
	Therefore, for any $t \in \overline{\{s\}}$, we have $D(\Theta_{\Delta}(t))=0$. As a result, condition C1 does not hold for diagnoser $D$, which contradicts the assumption.
	
	($\Leftarrow$) Suppose that system $G$ is $\varphi$-diagnosable. We consider a diagnoser $D:\mathcal{O}(\mathcal{L}(G)) \to \{0,1\}$ defined by: for any $\omega \in \overline{\mathcal{L}^{\varphi}_{e}(G)}$
	\begin{equation} \label{eq:diag}
		\begin{split}
			\!\!\!\!&D(\Theta_{\Delta}(\omega))= \\
			&\!\!\!\!\left\{\!\!
			\begin{array}{ll}
				1  &\text{if  } \forall s \!\in\! \overline{\mathcal{L}^{\varphi}_{e}(G)}:  \Theta_{\Delta}(\omega)\! =\! \Theta_{\Delta}(s) \!\Rightarrow\! \Sigma_{e,F} \!\in\! \Theta_{\Sigma}(s)  \\
				0  &\text{otherwise}.
			\end{array}
			\right. \\  
		\end{split}
	\end{equation}
	We claim that the diagnoser given by equation \eqref{eq:diag} satisfies condition C1 and C2.
	We first show that the diagnoser satisfies condition C2. For any extended normal string $s \in \overline{\mathcal{L}^{\varphi}_{e}(G)}$ and $\Sigma_{e,F} \notin \Theta_{\Sigma}(s)$, by Equation~\eqref{eq:diag}, we have $D(\Theta_{\Delta}(s))=0$, that is, C2 holds.
	To see that C1 holds, we consider an extended faulty string $s \in \mathcal{L}^{\varphi}_{e,F}(G)$. 
	By $\varphi$-diagnosability, there exists a finite prefix $t \in \overline{\{s\}}$ such that any finite extended string $\omega$ having the same observation with $t$ is faulty, i.e., $\Sigma_{e,F} \in \Theta_{\Sigma}(\omega)$. Then, by Equation \eqref{eq:diag}, we have $D(\Theta_{\Delta}(t))=1$, i.e., condition C1 also holds.
\end{proof}

\emph{Proof of Theorem~\ref{thm:verify varphi-diag}}
\begin{proof}
	$(\Leftarrow)$ 
	Assume that there exists a reachable cycle $\pi'= q_V^1 \xrightarrow{\sigma_V^1} q_V^2 \xrightarrow{\sigma_V^2} \cdots \xrightarrow{\sigma_V^{n-1}} q_V^n$ in the verification system $V$ such that    $q_V^{i} \in Q_{m,V}$ and $\theta_1(\sigma_V^j) \neq \varepsilon$ for some $i,j \in \{1,2,\cdots , n\}$, but the system $G$ is $\varphi$-diagnosable. 
	Without loss of generality, let $q_V^1=(q_1,q_2) \in Q_{m,V}$, i.e., $i=1$, and $s_2= \sigma_V^1 \sigma_V^2 \cdots \sigma_V^n$. 
	Since cycle $\pi'$ is reachable, we can find a string $s_1 \in \mathcal{L}^*(V)$ such that $q_V^1 \in f_V(q_{0,V}, s_1)$ where $q_{0,V} \in Q_{0,V}$. 
	After repeating the cycle for infinite number of times, we obtain an infinite string $s=s_1 (s_2)^\omega \in \mathcal{L}^\omega(V)$, over which we get a run $\pi =q_V^0\xrightarrow{\sigma_V^0} \cdots (q_V^1 \xrightarrow{\sigma_V^1} q_V^2 \xrightarrow{\sigma_V^2} \cdots \xrightarrow{\sigma_V^{n-1}} q_V^n )^\omega$.
	Along  run $\pi$, we can extract a path $\rho=q_V^0 \cdots (q_V^1 \cdots q_V^n)^\omega$ where  accepting state $q_V^1 \in Q_{m,V}$ appears for infinite number of times, i.e., $\textsf{Inf}(\rho) \cap Q_{m,V} \neq \emptyset$.
	Since there exists $i \in \{1,2,\cdots , n\}$ such that $\theta_1(\sigma_V^i) \neq \varepsilon$, the first component of sequence $s_l=\theta_1(s)$ is also an infinite extended string. 
	By $q_V^1=(q_1,q_2) \in Q_{m,V}$, we know that $q_1$ is included in $Q_{m,T} \cap Q_{F,T}$ which means  infinite extended string $s_l$ is faulty and accepting in system $T$, that is, $s_l \in \mathcal{L}_{e,F}^\varphi(G)$. 
	On the other hand, we know that $q_2 \in Q_{feas,T}\cap Q_{N,T}$ and by  Equation \eqref{eq:feasible}, the second component of sequence $s_r=\theta_2(s)$ is normal and every prefix of $s_r$ is a prefix of an accepting extended string, i.e., $\forall \omega\in \overline{\{s_r\}}, \omega \in \overline{\mathcal{L}^{\varphi}_{e}(G)}$ and $\Sigma_{e,F}\notin s_r$. 
	Thus, by the first property of the verification system, for any finite prefix $t \in \overline{\{s_l\}}$, there exists $\omega \in \overline{\{s_r\}} \subset \overline{\mathcal{L}^{\varphi}_{e}(G)}$ and $\Sigma_{e,F} \notin \omega$ such that $\Theta_{\Delta}(t)=\Theta_{\Delta}(\omega)$, that is, $G$ is not $\varphi$-diagnosable.

	$(\Rightarrow)$
	Suppose that  system $G$ is not $\varphi$-diagnosable. 
	That is, there exists an infinite faulty extended string $s \in \mathcal{L}_{e,F}^{\varphi}(G)$ such that for any prefix $t \in \overline{\{s\}}$, we have a normal extended string $\omega \in \overline{\mathcal{L}^{\varphi}_{e}(G)}$ with the same observation of $t$, i.e.,
	\begin{align}
		&(\exists s \in \mathcal{L}^{\varphi}_{e,F}(G))(\forall t \in \overline{\{s\}}): \nonumber \\
		&(\exists \omega \in \overline{\mathcal{L}^{\varphi}_{e}(G)} )  
		[\Theta_{\Delta}(w)=\Theta_{\Delta}(t) \wedge \Sigma_{e,F}\notin w]. \vspace{8pt} \label{eq:not diag}
	\end{align}
Thus, for  infinite faulty extended string $s \in \mathcal{L}^\varphi_{e,F}(G)$, there exists a path $\rho$ along   string $s$ in system $T$ such that $\textsf{Inf}(\rho) \cap Q_{m,T} \cap Q_T^F \neq \emptyset$. 
By the accepting condition of B\"{u}chi automata, we can select a state $q_1 \in \textsf{Inf}(\rho) \cap Q_{m,T}\cap Q_T^F$ such that there is a finite faulty prefix $t \in \overline{\{s\}}$, over which the path can reach state $q_1 \in Q_T$ exceeding $n=|\Sigma_e| \times (|\Sigma_e \cup \{\varepsilon\}|) \times |Q_T|$ times.
By Equation \eqref{eq:not diag}, there exists a normal extended string $\omega \in \overline{\mathcal{L}^{\varphi}_{e}(G)}$ having the same observation with $t$.
By Equation \eqref{eq:feasible},  set $\delta_T( Q_{0,T},\omega) \cap Q_{feas,T}\cap Q_{N,T}$ is not empty.
Thus, by the second property of the verification system, there exists a string $s_V \in \mathcal{L}(V)$ in verification system $V$ such that $\theta_1(s_V)=t, \theta_2(s_V)=\omega$ and the run over $s_V$ visits state $q_V \in \{(q_1,q_2): q_2 \in \delta_T( Q_{0,T},\omega) \cap Q_{feas,T}\cap Q_{N,T}\}$ by event $\sigma \in \{(\sigma_V^1, \sigma_V^2) \in \Sigma_V: \sigma_V^1 \neq \varepsilon\}$ exceeding $n$ times, that is, the structure $\xrightarrow{\sigma} q_V$ appears in $\pi$ exceeding $n$ times. 
By definition of   set $Q_{m,V}$, we know that  state $q_V$ is included in $Q_{m,V}$.
Since $|\{(q_1,q_2): q_2 \in \delta_T( Q_{0,T},\omega) \cap Q_{feas,T}\cap Q_{N,T}\}| \leq |Q_T|$ and $|\{(\sigma_V^1, \sigma_V^2) \in \Sigma_V: \sigma_V^1 \neq \varepsilon\}| \leq |\Sigma_e| \times (|\Sigma_e \cup \{\varepsilon\}|)$, by the Pigeonhole Principle, there exits a cycle $\pi = q_V^1 \xrightarrow{\sigma_V^1} q_V^2 \xrightarrow{\sigma_V^2} \cdots \xrightarrow{\sigma_V^{n-1}} q_V^n$ such that there exist $i,j \in \{1,2,\cdots , n-1\}$ satisfying $q_V^{i} \in Q_{m,V}$ and $\theta_1(\sigma_V^j) \neq \varepsilon$. 
\end{proof}
\end{document}